\DeclareMathOperator{\psan}{span}
\newcommand{\hilH}{\mathcal H}
\newcommand{\norm}[1]{\lVert #1\rVert}
\newcommand{\aver}[1]{\langle #1 \rangle}
\newcommand{\nN}{n\in \mathbb{N}}
\newcommand{\lN}{l\in \mathbb{N}}
\newcommand{\kN}{k\in \mathbb{N}}
\newcommand{\jN}{j\in \mathbb{N}}
\newcommand{\iN}{i\in \mathbb{N}}
\newcommand{\intt}[1]{\llbracket #1 \rrbracket}
\newcommand{\perpF}{{\perp _F}}
\newcommand{\expec}[1]{\mathbb{E}\left(#1\right)}
\newcommand{\algE}{{\mathcal E}}
\newcommand{\salg}{$\sigma$-algebra}
\newcommand{\rv}{random variable}
\newcommand{\Id}{\mathrm{Id}}
\newcommand{\ind}[1]{\mathbf{1}_{#1}}
\newtheorem{Theorem}{Theorem}[section]
\newtheorem{Corollary}[Theorem]{Corollary}
\newtheorem{Lemma}[Theorem]{Lemma}
\theoremstyle{definition}
\begin{document}
\allowdisplaybreaks

\newcommand{\arXivNumber}{2110.07456}

\renewcommand{\PaperNumber}{012}

\FirstPageHeading

\ShortArticleName{A Quantum $0-\infty$ Law}

\ArticleName{A Quantum $\boldsymbol{0-\infty}$ Law}

\Author{Michel BAUER~$^{\rm ab}$}

\AuthorNameForHeading{M.~Bauer}

\Address{$^{\rm a)}$~Universit\'e Paris-Saclay, CNRS, CEA, Institut de Physique Th\'eorique,\\
\hphantom{$^{\rm a)}$}~91191 Gif-sur-Yvette, France}
\Address{$^{\rm b)}$~PSL Research University, CNRS, \'Ecole normale sup\'erieure,\\
\hphantom{$^{\rm b)}$}~D\'epartement de math\'ematiques et applications, 75005 Paris, France}
\EmailD{\href{mailto:michel.bauer@ipht.fr}{michel.bauer@ipht.fr}}

\ArticleDates{Received October 27, 2021, in final form February 01, 2022; Published online February 10, 2022}

\Abstract{We give conditions under which a sequence of randomly chosen orthogonal sub\-spaces of a separable Hilbert space generates the whole space.}

\Keywords{random Hamiltonians; random geometry; Markov processes}

\Classification{60J05; 81-10}

\begin{flushright}
\begin{minipage}{47mm}
\it Dedicated to Denis Bernard\\
 on his 60\,$^{th}$ birthday
\end{minipage}
\end{flushright}

\section{Introduction}

While studying a quantum system, one is often faced with the task of comparing two bases of a separable Hilbert space. One obvious example is given by the eigenstate bases for a free and a perturbed Hamiltonian. Another example is when one basis is well-suited to describe local operators (in this basis they are diagonal, or their matrix elements decay rapidly away from the diagonal for instance) and the other is the eigenstate basis of the Hamiltonian. This is related to the eigenstate thermalization hypothesis \cite{PhysRevA.43.2046,PhysRevE.50.888} which can be -- at least partly -- rephrased as a physically motivated ansatz for matrix elements of local operators (say, in position space for instance) in the Hamiltonian eigenstate basis. The question also has some more practical aspects when some spaces of trial wave functions are used to get approximations of the low-lying energy eigenstates of a quantum system. In this setting, the dimension of the trial space is larger than the number of eigenstates which are to be approximated. In one version, all the approximate eigenstates are looked for in a single large space, but it may also be desirable to look for the first eigenstate in some trial space, for the second in a larger one and so on, leading to a nested structure like ``Russian dolls''.

Such a Matryoshka picture also paves the way to the construction of random Hamiltonians, and this is the vantage point we adopt in the following. Suppose we are given a (finite-dimensional) subspace $F_1$ of the full Hilbert space and we are to choose a ray, i.e., a one-dimensional subspace, $E_1$ inside it. There is a single unitary-invariant probability measure to make this choice at random. If one looks for another ray, orthogonal to $E_1$, or, what amounts to the same, for a two-dimensional subspace $E_2 \supset E_1$, in a larger (finite-dimensional) subspace $F_2 \supset F_1$ of the full Hilbert space, there is again a single unitary-invariant probability measure to make this choice at random. Given a sequence $\{0\}\subset F_1\subset F_2\subset\cdots$ of trial subspaces, i.e., of finite-dimensional subspaces of the full Hilbert space with $\dim F_k \geq k$ for $\kN$ such that $\cup_{\kN} F_k$ is dense in the full Hilbert space, this procedure can be continued to produce a random orthonormal sequence of rays in the Hilbert space, or what amounts to the same a sequence $\{0\}\subset E_1\subset E_2\subset\cdots$ with $E_k\subset F_k$ and $\dim E_k = k$ for $\kN$. Statistical properties of matrix elements with respect to this sequence, as compared to those in a fixed basis of the Hilbert space, can be studied. This is a difficult task. But there is a preliminary question, in fact the only one we address in this study: does this procedure really construct a basis, or do ``holes'' remain in the Hilbert space?

Section \ref{sec:defmainres} makes our framework precise. We introduce the notion of Matryoshka tailored for our purposes, and state our main result: a necessary and sufficient condition to get a complete orthonormal system, based on the growth of the dimensions of the trial spaces. Section \ref{sec:jjjparadox} reinterprets the problem as a quantum version of a classical puzzle/paradox. Section \ref{sec:mainproofs} is devoted to the proofs. Section \ref{sec:concl} mention further possible directions.

\section{Definitions and main result} \label{sec:defmainres}

Let $\hilH$ denote a complex separable infinite-dimensional Hilbert space. The norm is denoted by~$\norm{\;}$ and the inner product by $\aver{\; , \;}$, which we take to be linear in the second argument, as in quantum mechanics. If $S \subset \hilH$, we set $S^\perp:=\{v\in \hilH, \, \aver{s , v}=0 \text{ for every } s\in S\}$ as usual.

Let $\underline{n}:=(n_k)_{\kN}$ be an increasing sequence $0=n_0 \leq n_1\leq n_2\leq\cdots $ of integers with \mbox{$\lim_{k\to \infty}n_k=\infty$}.
An $\underline{n}$-Matryoshka $\underline{F}:=(F_k)_{\kN}$ is an increasing sequence $\{0\}=F_0\subset F_1\subset F_2\subset\cdots$ of subspaces of $\hilH$ with $\dim F_k:=n_k$ for $\kN$.
An $\underline{n}$-Matryoshka $\underline{F}$ is called total (in~$\hilH$) if $\cup_{\kN} F_k$ is dense in~$\hilH$.

In the following, until Section~\ref{sec:concl}, we fix once and for all the sequence $\underline{n}$ and a total $\underline{n}$-Mat\-ryoshka $\underline{F}$. Note that by the choice of an appropriate orthonormal basis $(v_n)_{n\geq 1}$ in $\hilH$ we could reduce the situation to the case when $F_k:=\psan (v_n)_{n\in \intt{1,n_k}}$. We shall sometimes exploit this possibility.
In the framework of numerical computation of low-lying eigenstates, $F_k$ would be the $k^{\text{th}}$ trial space, the space in which the first $k$ levels are to be found.

We also fix an increasing sequence $\underline{m}:=(m_k)_{\kN}$ of integers such that $m_k\leq n_k$ for $\kN$ but $\lim_{k\to \infty}m_k=\infty$. We set $(\delta m)_k:=m_{k+1}-m_k$.
An $\underline{m}$-Matryoshka in $\underline{F}$ is an $\underline{m}$-Matryoshka $\underline{E}:=(E_k)_{\kN}$ such that $E_k \subset F_k$ for $\kN$. In~the sequel, if $\underline{E}$ is an $\underline{m}$-Matryoshka in~$\underline{F}$, $E_k^\perpF$ denotes the orthogonal complement of~$E_k$ in~$F_k$ (not in~$\hilH$).
The case $m_k=k$ fits to our original motivation of finding $k$ low-lying eigenstates in $F_k$. More general sequences $(m_k)_{\kN}$ would allow to deal with degenerate energy levels for instance. Though this complicates notations a little bit, the mathematics is essentially the same.

We denote by $M_{\underline{F},\underline{m}}$, or simply $M$ when there is no risk of confusion, the set of $\underline{m}$-Matryoshkas in $\underline{F}$.
We recall that if $F$ is a finite-dimensional Hilbert space\footnote{In this study, a finite-dimensional subspace of a separable infinite-dimensional reference Hilbert space.} of dimension $n$ and $m\leq n$ there is an unique unitary invariant probability measure $\mu_F^m$ on the set of $m$-dimensional subspaces of~$F$ (a Grassmannian which we denote by $\mathbb{G}(F,m)$).

An important consequence of the definition of $\mu_F^m$ for our discussion is the following: If $v$ is a vector in~$F$, $E$~is chosen with $\mu_F^m$ and $U$ is the (orthogonal) projection of $v$ on $E$ (a random vector) then $\expec{\norm{U}^2}=\norm{v}^2\frac{m}{n}$ (taken to be $0$ if $n=0$). Due to the importance of this result, we give a quick derivation in Appendix~\ref{appsec:messubsp}. The vector $W:=v-U$ is the projection of $v$ on the orthogonal complement $E^\perp$ of $E$ in $F$, and it follows from uniqueness that the image measure of $\mu_F^m$ by the map from the Grassmannian of $m$-dimensional subspaces of $F$ to that of $n-m$-dimensional subspaces, $E\mapsto E^\perp$ or equivalently $U \mapsto W$, is $\mu_F^{n-m}$ so that $\expec{\norm{W}^2}=\norm{v}^2\frac{n-m}{n}$, which must be because $\norm{v}^2=\norm{U}^2+\norm{W}^2$ identically.

Note in passing that the large $n$ behavior of the formul\ae\ for $\expec{\norm{U}^2}$ and $\expec{\norm{W}^2}$ given above gives the intuitive reason why, as is well-known, no unitary invariant measure $\mu_F^m$ can exist when $F$ is infinite-dimensional.

We use the measures $\mu_F^m$ to endow the set $M$ of $\underline{m}$-Matryoshkas in $\underline{F}$ with a probability measure $P_{\underline{F},\underline{m}}$, or simply $P$, defined recursively as follows:
\begin{itemize}\itemsep=0pt
\item $E_0=F_0=\{0\}$.
\item $E_0,\dots,E_k$ being chosen, $E_{k+1}$ is chosen as a random (in a unitary invariant way) subspace of $F_{k+1}$ containing $E_k$. The condition of unitary invariance can be rephrased as follows: $E_{k+1}$ is the orthogonal direct sum of $E_k$ with an $(\delta m)_k$-dimensional subspace of $E_k^\perpF$ chosen with $\mu_{E_k^\perpF}^{(\delta m)_k}$.
\end{itemize}

The recursive, Markovian, nature of this definition ensures that the appropriate analog of Kolmogorov's consistency criterion holds automatically, so that it extends to a probability measure on $\underline{m}$-Matryoshkas in $\underline{F}$, i.e., to infinite sequences.

More precisely, letting $\algE_k$ denote the \salg\ generated by the random subspaces $E_0,\dots,E_k$ and $\sigma \big(\cup_{\kN} \algE_k\big)$ the \salg\ they generate, there is an unique $\sigma \big(\cup_{\kN} \algE_k\big)$-probability measure on $M$ which has the right finite-dimensional marginals. We complete this probability measure \big(with the \salg\ $\sigma \big(\cup_{\kN} \algE_k\big)$ suitably enlarged to a \salg\ we denote by $\algE$ to include null sets\big), leading to the complete filtered probability space $\big(M,\algE, (\algE_k)_{\kN},P\big)$.

It is clear from this construction that only $\underline{n}$ is relevant. The precise choice of $\underline{n}$-Matryo\-shka~$\underline{F}$, or when this comes into play of an orthonormal basis $(v_n)_{n\geq 1}$ in $\hilH$ with $F_k:=\psan (v_n)_{n\in \intt{1,n_k}}$ is irrelevant.

One basic question is under which conditions $\underline{E}$ is total \big(recall this means that $\cup_{\kN}E_k$ is dense in $\hilH$\big) with $P$-probability $1$. Denoting by $\overline{E}$ the closure of $\cup_{\kN}E_k$ in $\hilH$, the question is whether $\overline{E}^\perp$ \big(which is also the orthogonal of $\cup_{\kN}E_k$\big) reduces to $\{0\}$ (with $P$-probability $1$). This question has a simple and nice answer.

\begin{Theorem} \label{theo:maindense}
For $\kN^*$ define $r_k:=\frac{m_k-m_{k-1}}{n_{k}-m_{k-1}}$ if $n_{k}-m_{k-1}>0$ and $r_k:=1$ otherwise.

 If the series $\sum_k r_k$ is divergent the space $\overline{E}^\perp$ is zero-dimensional $($i.e., $\underline{E}$ is total$)$ with pro\-ba\-bility $1$.

 If the series $\sum_k r_k$ is convergent the space $\overline{E}^\perp$ is infinite-dimensional with probability $1$.
\end{Theorem}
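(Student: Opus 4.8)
The plan is to track, for each vector of a fixed orthonormal basis, the squared length of its component orthogonal to the running subspace $E_k$, and to recognise this quantity as a process whose conditional expectation contracts by exactly the factor $(1-r_{k+1})$ at each step. Choose the adapted orthonormal basis $(v_i)_{i\ge 1}$ with $F_k=\psan(v_i)_{i\in\intt{1,n_k}}$, as the text allows. Let $Q_k$ be the orthogonal projection of $\hilH$ onto $E_k^\perp$ and set $X^{(i)}_k:=\norm{Q_kv_i}^2$. Since the $E_k$ are nested, $Q_k$ decreases strongly to the projection $Q_\infty$ onto $\overline E^\perp$, so $X^{(i)}_k\downarrow X^{(i)}_\infty:=\norm{Q_\infty v_i}^2$. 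The core computation is a one-step recursion: letting $k_0(i)$ be the first index with $n_{k_0(i)}\ge i$, once $k\ge k_0(i)$ we have $Q_kv_i\in E_k^\perp\cap F_{k+1}$, a space of dimension $n_{k+1}-m_k$ inside which the new block $G_k$ (with $E_{k+1}=E_k\oplus G_k$) is a $\mu$-random $(\delta m)_k$-plane; applying the projection identity recalled before the theorem to $v=Q_kv_i$, $m=(\delta m)_k$, $n=n_{k+1}-m_k$ gives
\[
\expec{X^{(i)}_{k+1}\mid\algE_k}=(1-r_{k+1})\,X^{(i)}_k .
\]
Because $X^{(i)}_{k_0(i)-1}=1$ (as $v_i\perp F_{k_0(i)-1}\supset E_{k_0(i)-1}$) and the $r_j$ are deterministic, taking expectations yields $\expec{X^{(i)}_k}=\prod_{j=k_0(i)}^{k}(1-r_j)$ for $k\ge k_0(i)$.

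\emph{Divergent case.} If $\sum_k r_k=\infty$ then $\prod_{j\ge k_0(i)}(1-r_j)=0$, so $\expec{X^{(i)}_k}\to 0$ as $k\to\infty$; since $0\le X^{(i)}_k\le 1$ and $X^{(i)}_k\downarrow X^{(i)}_\infty$, dominated convergence gives $\expec{X^{(i)}_\infty}=0$, hence $X^{(i)}_\infty=0$ almost surely. Intersecting the countably many probability-one events $\{Q_\infty v_i=0\}$ forces $Q_\infty=0$ a.s., i.e.\ $\overline E^\perp=\{0\}$ and $\underline E$ is total with probability $1$.

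\emph{Convergent case.} If $\sum_k r_k<\infty$ then $p_i:=\expec{X^{(i)}_\infty}=\prod_{j\ge k_0(i)}(1-r_j)>0$, and since $k_0(i)\to\infty$ these are tails of a convergent product, so $p_i\to 1$. Writing the dimension of the defect space as a trace in the basis $(v_i)$, and using that $Q_\infty$ is an orthogonal projection,
\[
\dim\overline E^\perp=\mathrm{Tr}\,Q_\infty=\sum_{i\ge 1}\aver{v_i,Q_\infty v_i}=\sum_{i\ge 1}X^{(i)}_\infty ,
\]
whence $\expec{\dim\overline E^\perp}=\sum_i p_i=\infty$. This already gives $\overline E^\perp\neq\{0\}$ with positive probability, but not yet infinite dimension almost surely.

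\emph{Upgrade to almost-sure infinitude, and the main obstacle.} The plan is the second-moment method applied to the nondecreasing partial sums $T_N:=\sum_{i\le N}X^{(i)}_\infty$. Since $X^{(i)}_\infty\in[0,1]$ we have $\mathrm{Var}(X^{(i)}_\infty)\le\expec{X^{(i)}_\infty}=p_i$, so $\sum_{i\le N}\mathrm{Var}(X^{(i)}_\infty)\le\expec{T_N}$; the only thing left to control is the sign of the cross terms. The key technical fact I expect to be the crux is that for $i\neq j$ the survivals are non-positively correlated, $\mathrm{Cov}\big(X^{(i)}_\infty,X^{(j)}_\infty\big)\le 0$. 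Granting this, $\mathrm{Var}(T_N)\le\expec{T_N}=o\big(\expec{T_N}^2\big)$, and Chebyshev gives $T_N\ge\tfrac12\expec{T_N}$ with probability tending to $1$; as $T_N$ is nondecreasing and $\expec{T_N}\to\infty$, convergence in probability to $\infty$ along a monotone sequence forces $T_N\to\infty$, i.e.\ $\dim\overline E^\perp=\infty$, almost surely. I expect the covariance bound to follow by propagating, through the recursion above, the elementary fact that the squared projections of two orthonormal vectors onto a single uniformly random subspace are negatively correlated (they compete for the same room), which requires the second-order analogue of the Grassmannian integral used to prove the projection identity; an alternative route would be to establish a zero-one law for $\{\dim\overline E^\perp=\infty\}$ and combine it with $\expec{\dim\overline E^\perp}=\infty$, but the absence of a global unitary symmetry (the $F_k$ being fixed) makes such tail-triviality less immediate.
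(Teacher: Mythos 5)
Your treatment of the divergent case is correct and is essentially the paper's own argument: the one-step contraction $\expec{X^{(i)}_{k+1}\mid\algE_k}=(1-r_{k+1})X^{(i)}_k$ is Lemma~\ref{lem:markovMatryoshkas}, the product formula for $\expec{X^{(i)}_k}$ is the equality case of Corollary~\ref{cor:prodineq}, and the passage from $\expec{X^{(i)}_k}\to 0$ to almost-sure totality via monotone convergence and a countable intersection over the basis vectors is exactly how the paper concludes the first assertion.

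The convergent case, however, has a genuine gap precisely where you flag it: the bound $\mathrm{Cov}\big(X^{(i)}_\infty,X^{(j)}_\infty\big)\le 0$ is assumed, not proved, and it is not a routine extension of the first-moment identity. The ``competing for room'' heuristic applies to the squared projections of two \emph{orthogonal} vectors onto a \emph{single} uniformly random subspace (where it follows from $\sum_i\norm{P_Ev_i}^2=m$ being deterministic), but after one step of your recursion the residuals $W_k(v_i)$ and $W_k(v_j)$ are no longer orthogonal \big(indeed $\aver{W_k(v_i),W_k(v_j)}=-\aver{U_k(v_i),U_k(v_j)}$\big), and for non-orthogonal vectors the covariance of squared projections can be positive (take $v_i=v_j$: it is a variance). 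Propagating a signed covariance through the Markov chain would require tracking the joint evolution of $\norm{W_k(v_i)}^2$, $\norm{W_k(v_j)}^2$ and $|\aver{W_k(v_i),W_k(v_j)}|^2$ via second-order Grassmannian (Weingarten-type) integrals; the paper's own footnote warns that the classical/quantum agreement ``does not survive for higher correlations''. So the second-moment method is not established as written. The paper sidesteps second moments entirely: by Lemmas~\ref{lem:symnormproj} and~\ref{lem:densecharac}, if $\overline{E}^\perp$ were finite-dimensional then for each $\varepsilon>0$ there would exist $k$ with $\norm{\Phi v}\le\varepsilon\norm{v}$ for all $v\in F_k^\perp$; testing that event against the first-moment identity $\expec{\norm{\Phi v_i}^2}=s_i\norm{v_i}^2$ for $v_i\in F_{i-1}^\perp\cap F_i\subset F_k^\perp$, with $s_i\to 1$, forces the event to have probability $0$. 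You should either prove your covariance bound (nontrivial) or replace the second-moment step by such a first-moment argument.
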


This result is some kind of a paradox: if the series is convergent, even if the sequence $\underline{n}$ grows faster that the sequence $\underline{m}$ so that $\dim F_k-\dim E_k$ grows without bounds, the $F_k$s may succeed in filling the whole of $\hilH$! As a simple illustration, if $m_k=k$ and $n_k=2k$ for $\kN$ then $r_k=\frac{1}{k+1}$ and by the divergence of the harmonic series $\overline{E}^\perp=\{0\}$ with probability $1$ even if at each step $E_k^\perpF$ has dimension $k$, which grows without bounds.

This phenomenon is related to a classical paradox that we recall in the next section, explaining the name ``quantum $0-\infty$ law'' on the same occasion

\section{The Julie and Jack spending paradox} \label{sec:jjjparadox}

The Julie and Jack spending paradox is the following. Every month Julie and Jack receive $2$ gold coins each and they spend~$1$. They label the coins gathered on month $k$ as $2k-1$, $2k$ and make a pile of coins by putting the new earnings $2k-1$, then $2k$ on top of what remains of the previous earnings. However, they chose the coin they spend in a different way. Julie spends the coin at the top of the pile, while Jack spends the coin at the bottom of the pile. Thus, by the end of month $k$, they have received coins labeled $1,2,\dots,2k$ but Julie has spent coins $2,4,\dots,2k$ and kept coins $1,3,\dots,2k-1$ while Jack has spent coins $1,2,\dots,k$ and kept coins $k+1,k+2,\dots,2k$. Of course, each has~$k$ gold coins of savings, but letting~$k$ go to $\infty$, we see that by the end of times Julie has kept every coin with an odd label and is infinitely rich, while Jack has kept~\dots\ nothing and is ruined! We leave it to he reader to decide whether this is a~true paradox, or whether this simply teaches us something about the way set theory deals with infinity and limits of sets.

The Julie and Jack spending paradox is classical where classical can be given two meanings. First it is probably old, it's origin is obscure (to the author) and the above name is not canonical, but it is very widely known. Second, we shall exhibit a relation with our original problem which, we shall argue, is quantum, with a variant of the Julie and Jack spending paradox as a classical counterpart.

To make contact with our problem, we introduce a third person, John, who earns $2$ gold coins each month and spends $1$ just as Jack and Julie do. However, John is a gambler and chooses the coin he spends at random (uniformly) among the remaining ones. The question arises whether John is ruined by the end of times. As we shall soon explain, he is with probability $1$!

This paradox is related to the so-called Ross--Littlewood paradox (see for instance \cite[Example~$6a$]{Ross}, which also deals with the probabilistic side), which is usually presented as a supertask. A~supertask is a task consisting of a denumerable infinity of sub-taks, but which is performed in a finite amount of time. The name comes from computer science: one postulates the existence of a computer that performs a task in $1/2$ second, the next one in $1/4$ seconds, the next in $1/8$ seconds and so on, so that after one second a denumerable infinity of tasks, a supertask, has been accomplished. The possibility of such computers is highly speculative, but their theory is interesting and sheds some light on the limitations of standard computers. In the Ross--Littlewood paradox, the supertask is the following: an urn contains two balls at time 0, after $1/2$ seconds, one ball is removed and two are added, after another $1/4$ seconds one ball is removed and two are added, after another $1/8$ seconds one ball is removed and two are added, and so on. What remains in the urn at time $1$? If at each step the ball that is removed is one that was added just the step before, one has the analog of Julie's behavior, and at time $1$ the urn contains an infinite number of balls. However, if at each step the ball that is removed is chosen uniformly at random among the all the balls that are in the urn, one has an analog of John's behavior, and the urn is empty at time $1$.

We generalize the question, keeping the notations from Section \ref{sec:defmainres}. We assume that by the end of month $k$, Julie, Jack and John have earned a total of $n_k$ gold coins that they labeled from~$1$ to~$n_k$, and spent $m_k$ of them, with the same protocol as above. Each builds a pile of coins by putting the new earnings of month $k$, i.e., coins $n_{k-1}+1$ then $n_{k-1}+2$ up to $n_k$ on top of the what remains of the previous ones. The description of the expenses of Julie, who takes the coins she spends on the top of her pile, is a bit involved but irrelevant for our discussion and we leave it to the reader. Jack has spent coins labeled from $1$ to $m_k$. By hypothesis, $\lim_{k\to \infty} m_k=\infty$ so the fate of Jack is always complete ruin, independently of the growth of the sequence $\underline{n}$. For the expenses of John we give a recursive definition. By the end of month $1$, John has spent the coins labeled by a set $S_1$ of size $m_1$ chosen uniformly at random in $\intt{1,n_1}$. By the end of month $2$, he has spent the coins labeled by a set $S_2 \supset S_1$ of size $m_2$ chosen uniformly at random in $\intt{1,n_2}$, and so on. Choosing $S_{k+1} \supset S_k$ of size $m_{k+1}$ uniformly at random in $\intt{1,n_{k+1}}$ is the same as choosing a subset of size $(\delta m)_k:=m_{k+1}-m_k$ uniformly at random in $\intt{1,n_{k+1}}\setminus S_k$ which has size $n_{k+1}-m_k$. There are $\binom{n_{k+1}-m_k}{m_{k+1}-m_k}$ such subsets, so each has probability $\binom{n_{k+1}-m_k}{m_{k+1}-m_k}^{-1}$. This description extends to a probability measure on infinite sequences $\underline{S}:=(S_k)_{\kN}$ of subsets of~$\mathbb{N}$ with $\varnothing:=S_0\subset S_1 \subset S_2 \subset\cdots$, where $S_k\subset \intt{1,n_k}$ and $\# S_k =m_k$ for $\kN$. The generalized question is whether John is ruined or not by the end of times. This is answered by the following result:

\begin{Theorem} \label{theo:mainfill}
For $\kN^*$ define $r_k:=\frac{m_k-m_{k-1}}{n_{k}-m_{k-1}}$ if $n_{k}-m_{k-1}>0$ and $r_k:=1$ otherwise.
\begin{enumerate}\itemsep=0pt

\item[] If the series $\sum_k r_k$ is divergent then $\cup_{\kN} S_k=\mathbb{N}^*$ with probability $1$.

\item[] If the series $\sum_k r_k$ is convergent then $\mathbb{N}^* \setminus \cup_{\kN} S_k$ is infinite with probability $1$.
 \end{enumerate}
\end{Theorem}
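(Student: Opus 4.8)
The plan is to reduce the whole question to the survival of a single coin, and then, in the convergent case, to upgrade single-coin survival to the survival of infinitely many coins by a second-moment argument. Fix a coin $j$ and set $k_0(j):=\min\{k:n_k\geq j\}$, the first month in which $j$ is in circulation. The Markovian sampling rule gives a clean one-step survival probability: conditionally on $\algE_{k-1}$, and on the event that $j$ is still unspent at the start of month $k\geq k_0(j)$, the $(\delta m)_{k-1}$ coins retired during month $k$ form a uniformly chosen subset of the $n_k-m_{k-1}$ currently unspent coins, so $j$ is retired with probability exactly $\frac{m_k-m_{k-1}}{n_k-m_{k-1}}=r_k$, whatever the rest of $S_{k-1}$ is (the degenerate case $n_k-m_{k-1}=0$ is covered by the convention $r_k=1$). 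Writing $B_k:=\{j\notin S_k\}$ this says $\expec{\ind{B_k}\mid \algE_{k-1}}=(1-r_k)\ind{B_{k-1}}$ for $k\geq k_0(j)$; iterating and passing to the limit along the decreasing events $B_k$ gives the key formula
\[
P(j\text{ is never spent})=\prod_{k\geq k_0(j)}(1-r_k).
\]

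The divergent case follows at once. If $\sum_k r_k=\infty$ the infinite product vanishes, so every coin is spent with probability $1$, and a union bound over the countably many coins gives $P(\text{some coin survives})\leq\sum_j 0=0$; hence $\cup_{\kN}S_k=\mathbb N^*$ almost surely.

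For the convergent case, $\sum_k r_k<\infty$ forces $r_k\to 0$, so $p_j:=\prod_{k\geq k_0(j)}(1-r_k)>0$ for every $j$, and since $k_0(j)\to\infty$ the tail products converge to $1$, i.e., $p_j\to 1$; in particular $\sum_j p_j=\infty$. The point is to pass from ``each coin survives with positive probability'' to ``infinitely many survive almost surely'', and since the survival events are not independent I would use a negative-correlation estimate obtained from the same hypergeometric computation. For $j<j'$ and a month $k\geq k_0(j')$ in which both are in circulation, the conditional probability that neither is retired is $\binom{A-2}{B}/\binom{A}{B}\leq\big((A-B)/A\big)^2=(1-r_k)^2$ with $A=n_k-m_{k-1}$ and $B=m_k-m_{k-1}$; combining these factors with the single-coin factors $(1-r_k)$ for the months when only $j$ is in circulation yields $P(j,j'\text{ both survive})\leq p_jp_{j'}$. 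Letting $I_j$ be the survival indicator and $X^{(M)}:=\sum_{j=1}^{n_M}I_j$, negative correlation bounds $\mathrm{Var}\,X^{(M)}\leq\expec{X^{(M)}}$, so Chebyshev's inequality gives $P\big(X^{(M)}\geq\tfrac12\expec{X^{(M)}}\big)\to 1$. Because $\expec{X^{(M)}}=\sum_{j\leq n_M}p_j\to\infty$, for each fixed $N$ we obtain $P(X^{(M)}\geq N)\to 1$ as $M\to\infty$; since the total number of survivors dominates $X^{(M)}$ this forces $P\big(\#(\mathbb N^*\setminus\cup_{\kN}S_k)\geq N\big)=1$, and intersecting over $N$ shows that infinitely many coins survive almost surely.

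The main obstacle is the convergent case. The single-coin formula and the divergent half are soft, but turning positive individual survival probabilities into the almost-sure survival of infinitely many coins needs genuine control of the dependence between coins. The second-moment route succeeds precisely because hypergeometric sampling makes the survival indicators negatively correlated, which bounds the variance by the mean and lets me conclude even though the mean is infinite.
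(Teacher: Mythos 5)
Your proof is correct. The divergent half coincides with the paper's route: the paper proves Theorem~\ref{theo:mainfill} only by reference to the proof of Theorem~\ref{theo:maindense}, and its stated key fact is exactly your single-coin formula $P(j\text{ survives})=\prod_{k\geq k_0(j)}(1-r_k)$ obtained by conditioning and recursion, followed by a union bound. For the convergent half, however, you take a genuinely different and somewhat heavier route. The paper's argument, transposed from the second assertion of Theorem~\ref{theo:maindense}, is a pure first-moment argument: if only finitely many coins survived, then for some $k$ \emph{every} coin labelled $>n_k$ would be spent; but that event is contained in $\{j\text{ is spent}\}$ for each $j>n_k$, hence has probability at most $1-p_j$, and since $k_0(j)\to\infty$ forces the tail products $p_j\to 1$ (a fact you observe but use only to get $\sum_j p_j=\infty$), each such event is null, and a countable union finishes the proof --- no second moment, no correlation estimate. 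Your second-moment method with the hypergeometric negative-correlation bound $\binom{A-2}{B}/\binom{A}{B}\leq(1-r_k)^2$ is valid (the inequality reduces to $\frac{A-B-1}{A-1}\leq\frac{A-B}{A}$, i.e., $B\geq 0$) and has the merit of needing only $\sum_j p_j=\infty$ rather than $p_j\to1$, so it is in a sense more robust; the paper's route is shorter and is the one that survives the passage to the quantum setting, where exact disjointness of events is replaced by the $\varepsilon$-approximate orthogonality encoded in the sets $A_k(\varepsilon)$. One small slip: your claim that $p_j>0$ for \emph{every} $j$ can fail if some early $r_k=1$ (e.g., $m_1=n_1$ spends coin $1$ surely); this is harmless since $p_j\to1$ still gives $\sum_j p_j=\infty$ and infinitely many coins with positive survival probability.
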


We note the complete analogy with Theorem \ref{theo:maindense}. The original paradox is when $m_k=k$ and $n_k=2k$ for $\kN$, so $r_k=\frac{1}{k+1}$ and by the end of times, John is ruined with probability $1$. Note also that John's fate has some kind of dual relationship with the Birthday paradox and the Coupon collector problem.

If an orthonormal basis $(v_n)_{n\geq 1}$ in $\hilH$ is fixed and $F_k:=\psan (v_n)_{n\in \intt{1,n_k}}$, there is another simple way to put a measure, say $P_{\underline{F},\underline{m}}^{\text{classical}}$, on $\underline{m}$-Matryoshkas in $\underline{F}$, namely by choosing $\underline{S}$ as above and taking $E_k=\psan (v_n)_{n\in S_k}$.

To explain why we consider the measure $P_{\underline{F},\underline{m}}^{\text{classical}}$ as classical whereas $P_{\underline{F},\underline{m}}$ \big(which we could write $P_{\underline{F},\underline{m}}^{\text{quantum}}$\big) is of quantum nature, we take the analogy with quantum computing. A classical bit, or Cbit, can be in only two states, $0$ and $1$, and there are only two reversible operations on a Cbit, doing nothing or permuting $0$ and $1$. A quantum bit, or Qbit, on the other hand, is a unit vector in a two-dimensional Hilbert space, it can be expressed as a linear combination of two basis vectors, but every unitary transformation represents a reversible operation. These considerations generalize immediately to a collection of $n>1$ Cbits or Qbits. For Cbits, there are $2^n$ states and a reversible transformation permutes those states. For Qbits the states span a Hilbert space of dimension $2^n$ and a reversible transformation is an unitary on that space. In exactly the same sense, $P_{\underline{F},\underline{m}}^{\text{classical}}$ involves no superposition of states and randomness involves uniform choices with respect to permutations, which are classical symmetries, whereas $P_{\underline{F},\underline{m}}^{\text{quantum}}$ involves superposition of states, a quantum operation, and randomness involves uniform choices with respect to unitary transformations, which are quantum symmetries.

However, the outcome is the same: the alternative for the status of $\cup_{\kN}E_k$ within $\hilH$ is the same under the classical and the quantum constructions.

This classical/quantum distinction can also be rephrased as an illustration of the nuance between combinatorics and geometric probability (sometimes also called continuous combinatorics). Cases when the two yield strikingly similar results abound, and the similarity between
Theorems \ref{theo:maindense} and \ref{theo:mainfill} is but another illustration.

We shall not give (reproduce?) a detailed proof of Theorem \ref{theo:mainfill} because it is simpler than the proof of Theorem \ref{theo:maindense} but follows closely the same lines.\footnote{In fact, the main differences in the proof are really due to the classical/quantum difference: classical discrete $0-1$ \rv s which in terms of subspaces mean orthogonal or included, become continuous, with subspaces almost orthogonal or close to each other.} The heart of the similarity between the classical and the quantum case is embodied in the following elementary fact: if $n\in \intt{n_{i-1}+1,n_i}$ for some $\iN^*$, and $k\geq i$, the probability that $n$ does not belong to $S_k$ is $\prod_{i}^{k} (1-r_j)$. Just as its quantum counterpart given in Corollary \ref{cor:prodineq}, this is proven by conditioning and recursion. The identity says that certain covariances among random unit vectors are the same in the classical and the quantum case, but this does not survive for higher correlations.

\section{Main proofs} \label{sec:mainproofs}

Fix an $\underline{m}$-Matryoshka $\underline{E}$ and let $v\in \hilH$. For $\kN$ denote by $U_k(v)$ (or simply $U_k$ when there is no risk of confusion) the orthogonal projection of $v$ on $E_k$ (a finite-dimensional, hence closed, subspace of $\hilH$) and set $W_k=W_k(v):=v-U_k(v)$, the projection of $v$ on the orthogonal complement of $E_k$ in $\hilH$. Set $V_0(v):=0$ and for $\kN^*$ set $V_k=V_k(v):=U_k(v)-U_{k-1}(v)$, which by construction is the orthogonal projection of $v$ on the orthogonal complement of $E_{k-1}$ in $E_k$. To summarize, for $\kN^*$, $v\in \hilH$ splits as a sum of three mutually orthogonal vectors
\[
v=U_{k-1}(v)+V_k(v)+W_k(v)
\]
corresponding to the orthogonal sum decomposition
\[
 \hilH = E_{k-1} \oplus \big(E_{k-1}^\perp  \cap  E_k\big) \oplus E_k^\perp.
 \]
Below is a picturesque representation of this decomposition, where each axis represents a vector space:

\tikzset{math3d/.style=
 {x= {(-0.353cm,-0.353cm)}, z={(0cm,1cm)},y={(1cm,0cm)}}}
\tikzmath{\u = 2; \v =3; \w=2; \l=2;}
$$
\begin{tikzpicture}[math3d, >=latex]
\draw[thick, ->] (-3*\l,0,0) -- (2.5*\l,0,0) node[left] {$E_{k-1}$};
\draw[thick, ->] (0,-\l,0) -- (0,2.5*\l,0) node[below] {$E_{k-1}^\perp \cap E_k$};
\draw[thick, ->] (0,0,-\l) -- (0,0,1.5*\l) node[right] {$E_k^\perp$};
\coordinate (O) at (0,0,0);
\coordinate (U) at (\u,0,0);
\coordinate (V) at (0,\v,0);
\coordinate (W) at (0,0,\w);
\coordinate (UV) at (\u,\v,0);
\coordinate (UW) at (\u,0,\w);
\coordinate (VW) at (0,\v,\w);
\coordinate (v) at (\u,\v,\w);
\draw[dashed] (v) -- (UV) -- (U) -- (UW) -- cycle;
\draw[dashed] (UW) -- (W) -- (VW) -- (V) -- (UV);
\draw[dashed] (VW) -- (v);
\draw[red, thick, ->] (O) -- (v);
\draw (U) node[left] {$U_{k-1}$} node{$\bullet$};
\draw (V) node[above right] {$V_k$} node{$\bullet$};
\draw (W) node[above left] {$W_k$} node{$\bullet$};
\draw (v) node[below right] {$v$};
\end{tikzpicture}
$$
We collect a few simple facts in two lemmas:

\begin{Lemma} \label{lem:basicfacts0} Fix an $\underline{m}$-Matryoshka $\underline{E}$ and let $v\in \hilH$.
\begin{enumerate}\itemsep=0pt
 \item[$(i)$] For $\kN^*$, $v=W_k(v)+\sum_{j=1}^k V_j(v)$ is an orthogonal decomposition.
 \item[$(ii)$] The series $\sum_j V_j(v)$ converges in $\hilH$ and so does the sequence $W_k(v)$.
 \item[$(iii)$] The limit $\lim_{k\to \infty} W_k(v)$ is the orthogonal projection of $v$ on $\overline{E}^\perp$.
\end{enumerate}
\end{Lemma}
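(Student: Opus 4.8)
The plan is to obtain parts $(i)$ and $(ii)$ as direct consequences of telescoping and the Pythagorean identity, and then to identify the limit in $(iii)$ via the characterization of the orthogonal projection onto a closed subspace. For $(i)$ I would first note that the partial sum telescopes: since $V_j(v)=U_j(v)-U_{j-1}(v)$ and $U_0(v)=0$, one has $\sum_{j=1}^k V_j(v)=U_k(v)$, so that $W_k(v)+\sum_{j=1}^k V_j(v)=W_k(v)+U_k(v)=v$. Orthogonality follows from the nesting $E_0\subset E_1\subset\cdots$: the vector $W_k(v)$ lies in $E_k^\perp$ while each $V_j(v)$ with $j\leq k$ lies in $E_{j-1}^\perp\cap E_j\subseteq E_k$; and for $i<j$ one has $V_i(v)\in E_i\subseteq E_{j-1}$ whereas $V_j(v)\in E_{j-1}^\perp$, so distinct terms are mutually orthogonal. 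This is precisely the decomposition $\hilH=E_{k-1}\oplus(E_{k-1}^\perp\cap E_k)\oplus E_k^\perp$ displayed above.

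For $(ii)$, the orthogonality from $(i)$ and the Pythagorean identity give $\norm{v}^2=\norm{W_k(v)}^2+\sum_{j=1}^k\norm{V_j(v)}^2$ for every $k$. Since the left-hand side is fixed, the increasing partial sums $\sum_{j=1}^k\norm{V_j(v)}^2$ are bounded by $\norm{v}^2$, whence $\sum_j\norm{V_j(v)}^2<\infty$. As the $V_j(v)$ are mutually orthogonal and square-summable in norm, the series $\sum_j V_j(v)$ converges in $\hilH$; equivalently, the partial sums $U_k(v)$ are Cauchy, since $\norm{U_l(v)-U_k(v)}^2=\sum_{j=k+1}^l\norm{V_j(v)}^2$ for $l>k$. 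It follows that $W_k(v)=v-U_k(v)$ converges as well.

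For $(iii)$, write $U_\infty:=\lim_k U_k(v)$ and $W_\infty:=\lim_k W_k(v)=v-U_\infty$. I would verify the two defining properties of the orthogonal projection of $v$ onto the closed subspace $\overline{E}$: that $U_\infty\in\overline{E}$ and that $v-U_\infty$ is orthogonal to $\overline{E}$. The first holds because each $U_k(v)\in E_k\subseteq\overline{E}$ and $\overline{E}$ is closed. For the second, fix $k$; for every $l\geq k$ one has $W_l(v)\in E_l^\perp\subseteq E_k^\perp$, so $\aver{u,W_l(v)}=0$ for all $u\in E_k$, and letting $l\to\infty$ with continuity of the inner product gives $W_\infty\perp E_k$. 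As $k$ is arbitrary, $W_\infty$ is orthogonal to $\cup_k E_k$, and hence, again by continuity, to $\overline{E}$. By uniqueness of the orthogonal decomposition $\hilH=\overline{E}\oplus\overline{E}^\perp$, we conclude that $U_\infty$ is the projection of $v$ on $\overline{E}$ and $W_\infty$ is the projection of $v$ on $\overline{E}^\perp$, which is the assertion.

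There is essentially no hard step here: each part is a standard Hilbert-space argument resting on the Pythagorean identity, the square-summability criterion for convergence of orthogonal series, and the continuity of the inner product. The only place demanding mild care is the bookkeeping in $(i)$, namely correctly placing each $V_j(v)$ in $E_{j-1}^\perp\cap E_j$ and $W_k(v)$ in $E_k^\perp$, since it is exactly this orthogonality that yields the Bessel bound $\sum_j\norm{V_j(v)}^2\leq\norm{v}^2$ driving $(ii)$ and the identification of the limit in $(iii)$.
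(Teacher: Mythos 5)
Your proof is correct. Parts $(i)$ and $(ii)$ follow the paper exactly: telescoping gives $\sum_{j=1}^k V_j(v)=U_k(v)$, the nesting of the $E_j$ gives mutual orthogonality, and the Pythagorean identity $\norm{v}^2=\norm{W_k(v)}^2+\sum_{j=1}^k\norm{V_j(v)}^2$ yields the Bessel bound and hence convergence of the orthogonal series. For $(iii)$ you take a genuinely different (and arguably cleaner) route: you identify the limit directly by checking the two defining properties of the orthogonal projection --- $U_\infty\in\overline{E}$ because $\overline{E}$ is closed, and $W_\infty\perp\overline{E}$ by passing to the limit in $\aver{u,W_l(v)}=0$ for $u\in E_k$, $l\geq k$ --- then invoke uniqueness of the decomposition $\hilH=\overline{E}\oplus\overline{E}^\perp$. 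The paper instead splits $v$ into its components along $\overline{E}$ and $\overline{E}^\perp$ at the outset, observes that $W_k$ fixes the $\overline{E}^\perp$ component, and kills the $\overline{E}$ component by an explicit $\varepsilon$-approximation ($v=v'+v''$ with $v'\in E_j$, $\norm{v''}\leq\varepsilon$, so $\norm{W_k(v)}\leq\varepsilon$ for $k\geq j$). The paper's version does not need the convergence of $U_k(v)$ from $(ii)$ as an input for $(iii)$, while yours leans on it but avoids the $\varepsilon$-bookkeeping; both are standard and equally rigorous.
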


\begin{proof}
The decomposition of $v$ follows immediately from the definitions, and its orthogonality is a consequence of the fact that $(E_k)_{\kN}$ is an increasing sequence of subspaces and $V_j(v)$ belongs by definition to the orthogonal complement of $E_{j-1}$ in $E_j$ for $\jN^*$, while $W_k(v)$ is orthogonal to $E_k$, establishing $(i)$.

The convergence of the series $\sum_j V_j(v)$ is an immediate consequence of the orthogonality of the decomposition in $(i)$ by a standard Hilbert space argument,\footnotemark \
establishing $(ii)$.

Recall that $\overline{E}$ is the closure of $\cup_{\kN}E_k$ in $\hilH$. If $v\in \overline{E}^\perp$, then $W_k(v)=v$ for every $\kN$. If $v\in \cup_{\kN}E_k$, then $W_k(v)=0$ for large enough $k$. If~$v\in \overline{E}$, for every $\varepsilon >0$ there is a $\jN$, and~$v'$, $v''$ with $v'\in E_j$, $v''\in E_j^\perp$ such that $v=v'+v''$ and $\norm{v''}\leq \varepsilon$. Then $\norm{W_k(v'')}\leq \varepsilon$ for every $\kN$ while for $k\geq j$, $W_k(v')=0$ so $\norm{W_k(v)}\leq \varepsilon$ for $k\geq j$. This finishes the proof of~$(iii)$.
\end{proof}
\footnotetext{For the sake of completeness: as the summands $V_j(v)$ are orthogonal, $\norm{\sum_{j=1}^k V_j(v)}^2=\sum_{j=1}^k \norm{V_j(v)}^2 =\norm{v}^2-\norm{W_k(v)}^2 \leq \norm{v}^2$, so that $\sum_{j\geq 1} \norm{V_j(v)}^2 < +\infty$, i.e., the series with general term $\norm{V_j(v)}^2$ is a Cauchy sequence and then so is the series with general term $V_j(v)$.}

Recall that $\big(M,\algE, (\algE_k)_{\kN},P\big)$ is the complete filtered probability space we work with, where $M:=M_{\underline{F},\underline{m}}$ denotes the set of $\underline{m}$-Matryoshkas in $\underline{F}$, $\algE_k$ denotes, for $\kN$, the \salg\ generated by $k$ first components of a Matryoshka in $M$, $\algE$ contains $\sigma \big(\cup_{\kN} \algE_k\big)$, and $P:=P_{\underline{F},\underline{m}}$ is a complete $\algE$-probability measure defined recursively as explained briefly in Section \ref{sec:defmainres}.

\begin{Lemma} \label{lem:basicfacts1} Let $v\in \hilH$.
\begin{enumerate}\itemsep=0pt
\item[$(i)$] As functions on $(M,\algE)$ with values in $\hilH$, $U_k(v)$, $V_k(v)$ and $W_k(v)$ are $\algE_k$-measurable for each $\kN$.

\item[$(ii)$] The sequence $W_k(v)$ converges almost surely to $0$ if and only if
\[
\lim_{k\to \infty} \expec{\norm{W_k(v)}^2}=0.
\]
\end{enumerate}
\end{Lemma}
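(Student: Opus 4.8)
The plan is to treat the two parts by entirely different elementary arguments. For $(i)$ the whole point is that orthogonal projection onto a subspace is a measurable function of that subspace. First I would recall that, $F_k$ being finite-dimensional, the Grassmannian $\mathbb{G}(F_k,m_k)$ is homeomorphic to the set of rank-$m_k$ orthogonal projections of $F_k$ (with the operator-norm topology), so that $E\mapsto P_E$, and hence for fixed $v$ the map $E\mapsto P_E v$, is continuous and in particular Borel measurable. Since $E_k$ is, by definition, one of the generators of $\algE_k$, composing shows that $U_k(v)=P_{E_k}v$ is $\algE_k$-measurable. The measurability of $W_k(v)=v-U_k(v)$ is then immediate because $v$ is deterministic, and that of $V_k(v)=U_k(v)-U_{k-1}(v)$ follows at once because the filtration is increasing, so $U_{k-1}(v)$ is also $\algE_k$-measurable.

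For $(ii)$ the key observation, which I would establish first, is that $\norm{W_k(v)}$ is non-increasing in $k$ — not merely almost surely, but pointwise on $M$, for every Matryoshka. Indeed the orthogonal decomposition of Lemma~\ref{lem:basicfacts0}$(i)$ gives $\norm{v}^2=\norm{W_k(v)}^2+\sum_{j=1}^k\norm{V_j(v)}^2$, and since the partial sums on the right increase with $k$, the term $\norm{W_k(v)}^2$ decreases. By Lemma~\ref{lem:basicfacts0}$(ii)$--$(iii)$, $W_k(v)$ converges surely to the orthogonal projection $W_\infty(v)$ of $v$ on $\overline{E}^\perp$, so $\norm{W_k(v)}^2$ decreases surely to $\norm{W_\infty(v)}^2$.

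Given this, both implications follow from a single application of the monotone (equivalently dominated, with the constant integrable bound $\norm{v}^2$) convergence theorem. Since $0\le \norm{W_k(v)}^2\le \norm{v}^2$ and the sequence decreases pointwise to $\norm{W_\infty(v)}^2$, one gets $\lim_{k\to\infty}\expec{\norm{W_k(v)}^2}=\expec{\norm{W_\infty(v)}^2}$. Now, because $W_k(v)\to W_\infty(v)$ surely, the event $\{W_k(v)\to 0\}$ coincides with $\{W_\infty(v)=0\}$; and as $\norm{W_\infty(v)}^2$ is a nonnegative random variable, $W_\infty(v)=0$ almost surely is equivalent to $\expec{\norm{W_\infty(v)}^2}=0$, that is, to $\lim_k \expec{\norm{W_k(v)}^2}=0$. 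Chaining these equivalences closes both directions simultaneously.

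The hard part will not be $(ii)$, which is a routine monotone-convergence argument once the sure monotonicity of $\norm{W_k(v)}$ is noticed — and there is no null-set subtlety precisely because the convergence inherited from Lemma~\ref{lem:basicfacts0} is sure rather than merely almost sure. The only genuinely delicate point I expect is the measurability claim in $(i)$: one must pin down the Borel structure on the Grassmannian and the continuity of $E\mapsto P_E$, after which the rest of $(i)$ is purely formal.
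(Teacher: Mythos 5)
Your proposal is correct and follows essentially the same route as the paper: measurability of $U_k(v)$ from the $\algE_k$-measurability of $E_k$ for part $(i)$, and for part $(ii)$ the pointwise monotone decrease of $\norm{W_k(v)}^2$ combined with dominated convergence and the fact that a nonnegative random variable with zero expectation vanishes almost surely. The only difference is cosmetic: you spell out the Borel structure on the Grassmannian and the continuity of $E\mapsto P_E v$ in $(i)$, which the paper takes for granted.
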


\begin{proof}
As $\algE_k$ is the \salg\ generated by the random subspaces $E_0,\dots,E_k$ of a Matryoshka, $E_k$ is $\algE_k$-measurable by definition, so the projection $U_k(v)$ of $v$ on $E_k$ is $\algE_k$-measurable, and then so are $W_k(v)=v-U_k(v)$ and $V_k(v)=U_k(v)-U_{k-1}(v)$ (or $0$ if $k=0$), establishing $(i)$.

By $(ii)$ in Lemma \ref{lem:basicfacts0} and continuity of the norm, $Q(v):=\lim_{k\to \infty}\norm{W_k(v)}^2$ exists pointwise. In fact $\norm{W_k(v)}^2$ decreases as a function of $k$, pointwise, so by dominated convergence (for instance) $\lim_{k\to \infty} \expec{\norm{W_k}^2}$ exists and equals $\expec{Q}$. Hence if $W_k(v)$ converges almost surely to $0$, equivalently if $\norm{W_k(v)}^2$ converges almost surely to $0$, then $\lim_{k\to \infty} \expec{\norm{W_k(v)}^2}=0$, proving the direct implication of $(ii)$. Conversely, if $\lim_{k\to \infty} \expec{\norm{W_k(v)}^2}=0$ then $\expec{Q}=0$ which as $Q\geq 0$ implies that $Q=0$ almost surely, i.e., $\lim_{k\to \infty} W_k=0$ almost surely.
\end{proof}

Recall that by assumption $\cup_{\iN}F_i$ is dense in $\hilH$. So a subspace of $\hilH$ is dense if and only if the closure of that subspace contains $F_i$ for every $\iN^*$. Thus Lemma~\ref{lem:basicfacts1} embodies a strategy of proof of Theorem~\ref{theo:maindense} based on the study of $\expec{\norm{W_k(v)}^2}$ at large $k$ for $v\in \cup_{\iN}F_i$.

Suppose that $v\in F_i$ for some $\iN^*$. Then $v\in F_k$ for $k\geq i$, and then as the $E_j$s, $j\leq k$ are subspaces of $F_k$, each component in the decompositions $v=U_k(v)+W_k(v)=U_{k-1}(v)+V_k(v)+W_k(v)=W_k(v)+\sum_{j=1}^k V_j(v)$ belong to~$F_k$.

We shall use the following lemma:

\begin{Lemma}
Let $F$ be a finite-dimensional Hilbert space and $E'\subset F'\subset F$ be subspaces, whose dimensions are denoted by $m'$, $n'$, $n$. Let $v\in F'$. If $m'\leq m\leq n$, let $E$ be a random subspace of~$F$ of dimension $m$ containing $E'$ $($chosen in a unitary invariant way$)$ and let $E^{\perp}$ be its orthogonal complement in $F$. Write $v=u+V+W$ as an orthogonal decomposition, where $u\in E'$, $V$ belongs to the orthogonal complement of $E'$ in $E$, and $W \in E^{\perp}$.

Then $\expec{\norm{W}^2}=\norm{v-u}^2\frac{n-m}{n-m'}$ if $n>m'$ and $0$ otherwise.
\end{Lemma}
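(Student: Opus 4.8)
The plan is to reduce the statement to the single-step expectation formula recalled in Section~\ref{sec:defmainres} (and derived in Appendix~\ref{appsec:messubsp}), applied not in $F$ but in the orthogonal complement of $E'$ inside $F$. First I would note that $u$ is deterministic: since $E'\subset E$ for every admissible $E$, the $E'$-component $u$ of $v$ is simply the orthogonal projection of $v$ on the \emph{fixed} subspace $E'$, independent of the random choice. Consequently $w:=v-u$ is also deterministic; it is the projection of $v$ on $G:=(E')^{\perpF}$, the orthogonal complement of $E'$ in $F$, which has dimension $n-m'$, and one has $\norm{w}=\norm{v-u}$. If $n=m'$ then $G=\{0\}$, forcing $w=0$ and $W=0$, which settles the degenerate case; so I assume $n>m'$ from now on.

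Next I would locate $E^\perp$ and $W$ inside $G$. Writing $E'':=E\cap G$ for the orthogonal complement of $E'$ in $E$, the decomposition $F=E'\oplus G$ together with $E=E'\oplus E''$ shows that $E''$ is an $(m-m')$-dimensional subspace of $G$ and that $E^\perp$ (the orthogonal complement of $E$ in $F$) coincides with the orthogonal complement of $E''$ in $G$. Since $u\in E'$ is orthogonal to $E^\perp$, the projection $W$ of $v$ on $E^\perp$ equals the projection of $w=v-u$ on $E^\perp$; that is, $W$ is exactly the projection of the deterministic vector $w\in G$ on the orthogonal complement, within $G$, of the random subspace $E''$.

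The main point to be careful about is the law of $E''$. The map $E\mapsto E''=E\cap G$ is a bijection from the $m$-dimensional subspaces of $F$ containing $E'$ onto the $(m-m')$-dimensional subspaces of $G$, and it intertwines the action of the unitaries of $F$ fixing $E'$ pointwise (which act on $G$ as its full unitary group) with the corresponding action on $\mathbb{G}(G,m-m')$. By the uniqueness of the unitary-invariant probability measure on a Grassmannian, the unitary-invariant law of $E$ therefore pushes forward to $\mu_{G}^{m-m'}$ as the law of $E''$. Once this is established, I would simply invoke the expectation formula of Section~\ref{sec:defmainres} in the space $G$, with ambient dimension $n-m'$, subspace dimension $m-m'$, and vector $w$: the projection of $w$ on the orthogonal complement of $E''$ in $G$, namely $W$, satisfies
\[
\expec{\norm{W}^2}=\norm{w}^2\,\frac{(n-m')-(m-m')}{n-m'}=\norm{v-u}^2\,\frac{n-m}{n-m'},
\]
which is the claim. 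Note that $F'$ and its dimension $n'$ never enter the computation; they serve only to guarantee $v\in F$.
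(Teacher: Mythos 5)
Your proposal is correct and follows essentially the same route as the paper: the paper's proof likewise disposes of the degenerate case $m'=n$ and then reduces the claim to the basic Grassmannian expectation formula applied to the deterministic vector $v-u$ inside the orthogonal complement of $E'$ in $F$ (of dimension $n-m'$), with $W$ its projection on the random $(n-m)$-dimensional subspace $E^\perp$. You merely spell out more explicitly the identification of the law of $E\cap G$ with $\mu_G^{m-m'}$, which in the paper's construction is essentially the definition of choosing $E\supset E'$ in a unitary invariant way.
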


\begin{proof}
If $m'=n$ then $m'=m$, $E'=E=F'=F$ so $v=u$, $W=0$ and $\expec{\norm{W}^2}=0$. If~$n>m'$, the result is simply the translation of the basic property of the Grassmannian measure recalled above for the vector $v-u$, which lays down in the orthogonal complement of~$E'$ in~$F$ (of dimension $n-m'$) with $W$ its projection on $E^{\perp}$, a random subspace of dimen\-sion~$n-m$.\looseness=1
\end{proof}

We can now make use the Markov property of the measure on Matryoshkas. Recall that, for $\kN^*$, $r_k:=\frac{m_k-m_{k-1}}{n_{k}-m_{k-1}}$ if $n_{k}-m_{k-1}>0$ and $r_k:=1$ otherwise.

\begin{Lemma} \label{lem:markovMatryoshkas} Let $\iN^*$ and $v\in F_i$.
 For $k\geq i$, $\expec{\norm{W_k(v)}^2|\algE_{k-1}}=\norm{W_{k-1}(v)}^2(1-r_k)$.
\end{Lemma}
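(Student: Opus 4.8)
The plan is to condition on the past \salg\ $\algE_{k-1}$ and read the result directly off the preceding Lemma applied inside the single finite-dimensional space $F_k$. First I would record what the Markov construction supplies once $\algE_{k-1}$ is frozen. On this \salg\ the subspaces $E_0,\dots,E_{k-1}$, in particular $E_{k-1}$, are determined, and by Lemma~\ref{lem:basicfacts1}$(i)$ the vectors $U_{k-1}(v)$ and $W_{k-1}(v)=v-U_{k-1}(v)$ are $\algE_{k-1}$-measurable; they therefore factor out of $\expec{\,\cdot\mid\algE_{k-1}}$ as constants. By the Markov nature of the definition of $P$, conditionally on $\algE_{k-1}$ the subspace $E_k$ is a unitary-invariant random $m_k$-dimensional subspace of $F_k$ containing $E_{k-1}$, which is precisely the hypothesis of the preceding Lemma. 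Hence the conditional expectation equals the expectation computed there, with the $\algE_{k-1}$-measurable data substituted.

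Next I would apply that Lemma with $F:=F_k$ (so $n:=n_k$), $E':=E_{k-1}$ (so $m':=m_{k-1}$), $m:=m_k$, and $F':=F_k$, which is legitimate because $k\geq i$ gives $E_{k-1}\subset F_{k-1}\subset F_k$ and $v\in F_i\subset F_k$. The orthogonal decomposition $v=u+V+W$ produced by the Lemma then coincides term by term with $v=U_{k-1}(v)+V_k(v)+W_k(v)$: the component $u$ on $E_{k-1}$ is $U_{k-1}(v)$, and the component $V$ on the complement of $E_{k-1}$ in $E_k$ is $V_k(v)$, both by definition. The only identification deserving a word is the last one: $W_k(v)$ was defined as the projection of $v$ on $E_k^\perp$ taken in $\hilH$, whereas the Lemma's $W$ is the projection on the complement of $E_k$ inside $F_k$; since $v\in F_k$ has no component along $F_k^\perp$, the two projections agree.

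Finally I would substitute into the Lemma's formula. When $n_k>m_{k-1}$ it gives
\[
\expec{\norm{W_k(v)}^2\mid\algE_{k-1}}
=\norm{v-U_{k-1}(v)}^2\,\frac{n_k-m_k}{n_k-m_{k-1}}
=\norm{W_{k-1}(v)}^2(1-r_k),
\]
using $v-U_{k-1}(v)=W_{k-1}(v)$ and $\frac{n_k-m_k}{n_k-m_{k-1}}=1-\frac{m_k-m_{k-1}}{n_k-m_{k-1}}=1-r_k$. In the remaining degenerate case $n_k-m_{k-1}=0$ one has $m_{k-1}\leq m_k\leq n_k=m_{k-1}$, which forces $E_{k-1}=F_k=E_k$ and hence $v\in E_k$, so $W_k(v)=0$ and the left-hand side vanishes; this matches $r_k=1$, i.e.\ $1-r_k=0$, on the right.

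I do not expect a genuine obstacle for this lemma: the content is essentially a change of viewpoint that lets the preceding Grassmannian computation be invoked. The only points requiring care are the conditioning bookkeeping (that $W_{k-1}(v)$ is $\algE_{k-1}$-measurable and so passes outside the conditional expectation, together with the Markov property identifying the conditional law of $E_k$) and the ambient-versus-$F_k$ projection identification noted above; both are routine but should be stated explicitly.
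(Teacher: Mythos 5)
Your proposal is correct and follows the same route as the paper's proof: condition on $\algE_{k-1}$, invoke the preceding Grassmannian lemma with $F:=F_k$, $E':=E_{k-1}$, $m:=m_k$, and substitute $v-U_{k-1}(v)=W_{k-1}(v)$ together with $\frac{n_k-m_k}{n_k-m_{k-1}}=1-r_k$. You merely spell out a few points the paper leaves implicit (the measurability bookkeeping, the ambient-versus-$F_k$ projection identification, and the degenerate case $n_k=m_{k-1}$), all of which are handled correctly.
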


\begin{proof}
Write $v=U_{k-1}+V_k+W_k$. Then apriori $U_{k-1}$, $V_k$ and $W_k$ are random variables, but conditionally on $\algE_{k-1}$ we are in position to apply the previous lemma and get $\expec{\norm{W_k}^2|\algE_{k-1}}=\norm{v-U_{k-1}}^2\frac{n_k-m_k}{n_k-m_{k-1}}$ if $n_{k}-m_{k-1}>0$ and $0$ otherwise. By construction, $v-U_{k-1}=W_{k-1}$ so using the definition of $r_k$ gives the announced formula.
\end{proof}

There is a useful corollary:

\begin{Corollary} \label{cor:prodineq} Let $\iN^*$ and $v\in F_i$.
For $k \geq i$,
\[
\expec{\norm{W_k(v)}^2}=\expec{\norm{W_{i-1}(v)}^2}\prod_{j=i}^{k} (1-r_j) \leq \norm{v}^2\prod_{j=i}^{k} (1-r_j),
\]
with equality if $v$ belongs to the orthogonal complement of $F_{i-1}$ in $F_i$.
\end{Corollary}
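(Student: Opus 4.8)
The plan is to integrate the conditional recursion of Lemma~\ref{lem:markovMatryoshkas} and iterate it. First I would note that since $v\in F_i$ and $F_i\subset F_j$ for every $j\geq i$, the hypothesis of Lemma~\ref{lem:markovMatryoshkas} is satisfied at each index $j$ with $i\leq j\leq k$, giving $\expec{\norm{W_j(v)}^2|\algE_{j-1}}=\norm{W_{j-1}(v)}^2(1-r_j)$. Taking the full expectation of both sides and invoking the tower property, together with the fact that $r_j$ is a deterministic constant that may be pulled out of the expectation, yields the one-step scalar recursion $\expec{\norm{W_j(v)}^2}=(1-r_j)\expec{\norm{W_{j-1}(v)}^2}$. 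The measurability of $W_{j-1}(v)$ with respect to $\algE_{j-1}$ needed here is exactly Lemma~\ref{lem:basicfacts1}$(i)$.

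Second, I would iterate this identity from $j=k$ down to $j=i$. The successive factors assemble into a product and the recursion collapses to $\expec{\norm{W_k(v)}^2}=\expec{\norm{W_{i-1}(v)}^2}\prod_{j=i}^{k}(1-r_j)$, which is precisely the claimed equality. For the inequality I would then observe that $W_{i-1}(v)$ is the orthogonal projection of $v$ onto the orthogonal complement of $E_{i-1}$, so $\norm{W_{i-1}(v)}\leq\norm{v}$ pointwise and hence $\expec{\norm{W_{i-1}(v)}^2}\leq\norm{v}^2$; since each factor $1-r_j$ is nonnegative, multiplying through by the product $\prod_{j=i}^{k}(1-r_j)$ preserves the bound.

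Finally, for the equality case, suppose $v$ lies in the orthogonal complement of $F_{i-1}$ in $F_i$. Every realization of an $\underline{m}$-Matryoshka in $\underline{F}$ satisfies $E_{i-1}\subset F_{i-1}$, so $v\perp E_{i-1}$; consequently the projection $U_{i-1}(v)$ of $v$ onto $E_{i-1}$ vanishes identically and $W_{i-1}(v)=v$ for every outcome. Thus $\expec{\norm{W_{i-1}(v)}^2}=\norm{v}^2$ and the inequality is saturated. I do not expect a genuine obstacle in this corollary, as it is a direct consequence of Lemma~\ref{lem:markovMatryoshkas}; the only point requiring a little care is recognizing that the $r_j$ are fixed numbers, determined by the prescribed sequences $\underline{n}$ and $\underline{m}$ alone and not random, so that integrating each conditional step produces the clean deterministic recursion above rather than a correction term.
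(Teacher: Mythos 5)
Your proposal is correct and follows essentially the same route as the paper: iterate Lemma~\ref{lem:markovMatryoshkas} via the tower property to get the product formula, bound $\expec{\norm{W_{i-1}(v)}^2}$ by $\norm{v}^2$ using that $W_{i-1}(v)$ is an orthogonal projection of $v$, and observe that $v\perp E_{i-1}\subset F_{i-1}$ forces $U_{i-1}(v)=0$ in the equality case. Your remark that the $r_j$ are deterministic is exactly the point that makes the integrated recursion clean.
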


\begin{proof}
{\sloppy
For $k>0$, we know that $v-U_{k-1}=W_{k-1}$ the formula in Lemma \ref{lem:markovMatryoshkas} yields $\expec{\norm{W_k}^2|\algE_{k-1}}\allowbreak=\norm{W_{k-1}}^2(1-r_k)$, which can be used recursively by nesting of conditional expectations to yield $\expec{\norm{W_k}^2}=\expec{\norm{W_{i-1}}^2}\prod_{j=i}^{k} (1-r_j)$.\footnotemark

}

The inequality follows from the orthogonality properties in the decompositions $v=U_{i-1} + V_i + W_i$, equivalently $v-U_{i-1}= V_i + W_i$, which yield $\norm{v}^2= \norm{U_{i-1}}^2+\norm{V_i(v)}^2+\norm{W_i}^2 \geq \norm{V_i}^2+\norm{W_i}^2=\norm{v-U_{i-1}}^2=\norm{W_{i-1}}^2$.

Finally, if $v$ belongs to the orthogonal complement of $F_{i-1}$ in $F_i$ then $v$ is orthogonal to $E_{i-1} \subset F_{i-1}$, i.e., $U_{i-1}=0$ and the inequality turns into an equality.
\end{proof}
\footnotetext{Note that one cannot go further down because $W_{i-1}$ has no reason apriori to belong to $F_{i-1}$.}

We can now prove the first assertion of Theorem \ref{theo:maindense}.

\begin{proof}[Proof the first assertion of Theorem \ref{theo:maindense}]
By assumption, $\cup_{\iN}F_i$ is dense in $\hilH$. So $\cup_{\kN}E_k$ is dense in $\hilH$ if and only if it is dense in $\cup_{\iN}F_i$. Equivalently, $\cup_{\kN}E_k$ is dense in~$\hilH$ if and only if any $v$ in some $F_i$ is in the closure of $\cup_{\kN}E_k$, i.e., if and only if the sequence $(W_k(v))_{k\geq i}$ converges to $0$ for any $v$ in some $F_i$.

Using the linearity of the map $v\mapsto W_k(v)$, to prove that $\lim_{k\to \infty} W_k(v)=0$ for every $v\in \cup_{\iN}F_i$, it suffices to check this convergence when $v$ ranges over a basis of the finite-dimensional space $F_i$ for each $i$. We fix an orthonormal basis $(v_n)_{n\geq 1}$ in $\hilH$ such that $F_i:=\psan (v_n)_{n\in \intt{1,n_i}}$. Then $W_k(v_n))$ is well-defined for large enough $k$, i.e., for $k$ such that $n_k\geq n$: $\cup_{\kN}E_k$ is dense in $\hilH$ if and only if $\lim_{k \to \infty} W_k(v_n)=0$ for every $\nN^*$.

By $(ii)$ in Lemma \ref{lem:basicfacts1} we know that $\lim_{k\to \infty} W_k(v)=0$ holds almost surely for a given $v$ in some~$F_i$ if and only if $\lim_{k\to \infty} \expec{\norm{W_k(v)}^2}=0$. By the above remark, it is enough to let $v$ range over the countable set $(v_n)_{n\geq 1}$.

By Corollary \ref{cor:prodineq}, for $\iN^*$, $n\in \intt{n_{i-1}+1,n_i}$ and $k\geq i$, $\expec{\norm{W_k(v_n)}^2}= \prod_{i}^{k} (1-r_j)$. Thus $\cup_{\kN}E_k$ is dense in $\hilH$ if and only if the infinite product ``diverges'' to~$0$, i.e., if $\lim_{k\to \infty} \prod_{i}^{k} (1-r_j)\allowbreak=0$ for every $\iN^*$. As $r_j\in [0,1]$ for each $\jN ^*$, this is equivalent to the divergence of the series $\sum_j r_j$, concluding the proof.
\end{proof}

For the second assertion, we shall make use of the following general but elementary Hilbert space theory standard fact:

\begin{Lemma} \label{lem:symnormproj}
 Let $G$, $H$ be two closed subspaces of $\hilH$ and let $\Phi$, $\Psi$ denote the corresponding orthogonal projectors. Then the norm of $\Phi$ restricted to $H$ and the norm of $\Psi$ restricted to $G$ are equal.
\end{Lemma}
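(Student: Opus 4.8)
The plan is to express each of the two restricted operator norms as one and the same symmetric double supremum of inner products, so that the equality becomes manifest. Writing $\norm{\Phi|_H} = \sup\{\norm{\Phi h} : h \in H, \ \norm{h} = 1\}$, I would first use that $\Phi h$ lies in $G$ to rewrite each norm $\norm{\Phi h}$ itself as a supremum: since $\Phi h \in G$, Cauchy--Schwarz gives $\norm{\Phi h} = \sup\{|\aver{g, \Phi h}| : g \in G, \ \norm{g} = 1\}$ (the supremum being attained at $g = \Phi h / \norm{\Phi h}$ when $\Phi h \neq 0$). Hence
\[
\norm{\Phi|_H} = \sup\{|\aver{g, \Phi h}| : g \in G, \ h \in H, \ \norm{g} = \norm{h} = 1\}.
\]

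The key simplification comes from the fact that $\Phi$ is a self-adjoint idempotent. For $g \in G$ one has $\Phi g = g$, so $\aver{g, \Phi h} = \aver{\Phi g, h} = \aver{g, h}$. Therefore
\[
\norm{\Phi|_H} = \sup\{|\aver{g, h}| : g \in G, \ h \in H, \ \norm{g} = \norm{h} = 1\}.
\]
Running the identical argument with the roles of the two subspaces exchanged expresses $\norm{\Psi|_G}$ as $\sup\{|\aver{h, g}| : g \in G, \ h \in H, \ \norm{g}=\norm{h}=1\}$. Since $|\aver{h,g}| = |\aver{g,h}|$ and the index set of the supremum is symmetric in $g$ and $h$, the two suprema coincide, which is the claim.

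There is essentially no hard step here: the content is entirely in the reformulation, and the only points requiring a moment's care are the degenerate cases $G = \{0\}$ or $H = \{0\}$ (where both restricted norms are $0$) and the attainment in the Cauchy--Schwarz step. An equivalent and equally short route, which I would mention as an alternative, is to observe that $\norm{\Phi|_H} = \norm{\Phi\Psi}$ and $\norm{\Psi|_G} = \norm{\Psi\Phi}$, and then invoke $\norm{\Psi\Phi} = \norm{(\Phi\Psi)^*} = \norm{\Phi\Psi}$ together with the self-adjointness of the projectors.
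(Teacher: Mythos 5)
Your proof is correct and rests on the same key identity as the paper's, namely $\aver{g,\Phi h}=\aver{\Psi g,h}=\aver{g,h}$ for $g\in G$, $h\in H$, combined with Cauchy--Schwarz/duality; packaging both restricted norms as the symmetric double supremum $\sup\{|\aver{g,h}|\colon g\in G,\, h\in H,\, \norm{g}=\norm{h}=1\}$ is just a tidier way of stating the paper's ``prove one inequality, then invoke symmetry'' argument. Your alternative via $\norm{\Phi\Psi}=\norm{(\Phi\Psi)^*}=\norm{\Psi\Phi}$ is also valid and standard.
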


\begin{proof}
Recall that $\aver{\; , \;}$ denotes the inner product in~$\hilH$. For every $v\in G$ and $w\in H$, $\aver{v,\Phi w}=\aver{\Psi v,w}$ (because both equal $\aver{v,w}$). By the Cauchy--Schwarz inequality $|\aver{v,\Phi w}|=|\aver{\Psi v,w}| \leq \norm{\Psi v} \norm{w}$. If $\delta\geq 0$ is such that $\norm{\Psi v}\leq \delta \norm{v}$ then $|\aver{v,\Phi w}|\leq \delta \norm{v} \norm{w}$. If~$\norm{\Psi v}\leq \delta \norm{v}$ holds for every $v\in G$, take $v=\Phi w$ to obtain $\norm{\Phi w}^2=|\aver{\Phi w,\Phi w}|\leq \delta \norm{\Phi w} \norm{w}$ for every $w\in H$. If~$\Phi w\neq 0$ divide by $\norm{\Phi w}$ to get $\norm{\Phi w}\leq \delta \norm{w}$, and this inequality is also true if $\Phi w= 0$. Thus the norm of $\Phi$ restricted to $H$ is at most the norm of $\Psi$ restricted to $G$. By symmetry, they are equal.
\end{proof}

\begin{Lemma} \label{lem:densecharac}
Let $(H_k)_{\kN}$ be an increasing sequence of subspaces of $\hilH$. Then $\cup_{\kN} H_k$ is dense in~$\hilH$ if and only if for every finite-dimensional subspace $G$ of $\hilH$ and every $\varepsilon >0$ there is a~$\kN$ such that the orthogonal projection on $G$ restricted to the orthogonal complement of $H_k$ in $\hilH$ has norm $\leq \varepsilon$.
\end{Lemma}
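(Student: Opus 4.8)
The plan is to use Lemma~\ref{lem:symnormproj} to pass freely between the two natural ways of measuring the relevant projection, and then to reduce both implications to an approximation statement on finitely many vectors. Write $P_G$ for the orthogonal projector onto $G$ and $P_{H_k^\perp}$ for the orthogonal projector onto the orthogonal complement $H_k^\perp$ of $H_k$ in $\hilH$. Since $G$ is finite-dimensional (hence closed) and $H_k^\perp$ is closed, Lemma~\ref{lem:symnormproj}, applied to the pair $G$, $H_k^\perp$, tells us that the norm of $P_G$ restricted to $H_k^\perp$ equals the norm of $P_{H_k^\perp}$ restricted to $G$. This latter quantity is $\sup\{\norm{P_{H_k^\perp}v}\colon v\in G,\ \norm{v}=1\}=\sup\{\norm{v-P_{H_k}v}\colon v\in G,\ \norm{v}=1\}$, i.e.\ the largest distance from a unit vector of $G$ to $H_k$. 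Thus the condition in the statement says precisely: for every finite-dimensional $G$ and every $\varepsilon>0$ there is a $\kN$ for which every unit vector of $G$ lies within $\varepsilon$ of $H_k$.

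For the ``if'' direction I would take an arbitrary nonzero $v\in\hilH$ and $\varepsilon>0$, and apply the hypothesis to the one-dimensional space $G:=\psan(v)$ with threshold $\varepsilon/\norm{v}$, producing a $\kN$ for which $P_G$ restricted to $H_k^\perp$ has norm $\leq\varepsilon/\norm{v}$. By the reformulation above this yields $\norm{v-P_{H_k}v}\leq\varepsilon$, so $P_{H_k}v\in H_k$ approximates $v$ within $\varepsilon$; as $v$ and $\varepsilon$ are arbitrary, $\cup_{\kN}H_k$ is dense in $\hilH$.

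For the ``only if'' direction I would fix a finite-dimensional $G$ and $\varepsilon>0$, choose an orthonormal basis $e_1,\dots,e_d$ of $G$, and use density of $\cup_{\kN}H_k$ to approximate each $e_i$ within $\varepsilon/\sqrt{d}$ by some vector of $H_{k_i}$. Here the key structural input is that $(H_k)_{\kN}$ is increasing, so setting $k:=\max_i k_i$ makes all the approximations valid simultaneously in $H_k$, whence $\norm{P_{H_k^\perp}e_i}\leq\varepsilon/\sqrt{d}$ for every $i$. For a general unit vector $v=\sum_i c_i e_i$ of $G$, with $\sum_i|c_i|^2=1$, the triangle and Cauchy--Schwarz inequalities give $\norm{P_{H_k^\perp}v}\leq\sum_i|c_i|\,\norm{P_{H_k^\perp}e_i}\leq(\varepsilon/\sqrt{d})\sum_i|c_i|\leq\varepsilon$, so the norm of $P_{H_k^\perp}$ restricted to $G$ is $\leq\varepsilon$, and Lemma~\ref{lem:symnormproj} transfers this bound back to $P_G$ restricted to $H_k^\perp$.

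I do not expect a genuine obstacle here: the content is entirely elementary once Lemma~\ref{lem:symnormproj} is available. The only points requiring a little care are, first, correctly matching the two restriction-norms through the symmetry lemma so that the hypothesis (stated for $P_G$ restricted to $H_k^\perp$) can be exploited in the more tractable form involving $P_{H_k^\perp}$ restricted to $G$; and second, upgrading a per-basis-vector approximation into a uniform operator-norm bound over all of $G$. This last step is where finite-dimensionality of $G$ (the factor $\sqrt{d}$) and the monotonicity of the sequence $(H_k)_{\kN}$ both genuinely enter the argument.
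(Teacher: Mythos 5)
Your proof follows the paper's argument essentially verbatim: both directions hinge on the symmetry Lemma~\ref{lem:symnormproj}, with $G=\psan(v)$ for the ``if'' part and an orthonormal basis of $G$ combined with the monotonicity of $(H_k)$ and Cauchy--Schwarz for the ``only if'' part. The one small imprecision is your assertion that $P_{H_k}v\in H_k$: the subspaces $H_k$ are not assumed closed, so $P_{H_k}v=(\Id-P_{H_k^\perp})v$ lies only in $\overline{H_k}$ --- the paper handles this by splitting $\varepsilon$ in two and choosing $w\in H_k$ with $\norm{P_{H_k}v-w}\leq\varepsilon/2$, though your conclusion survives anyway since $\overline{H_k}\subset\overline{\cup_{k}H_k}$.
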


\begin{proof}
For $\kN$, let $H_k^{\perp}$ denote the orthogonal complement of $H_k$ in $\hilH$ and $\Psi_k$ the orthogonal projection on $\big(H_k^{\perp}\big)^{\perp}$, which is the closure of $H_k$. Set $\Psi_k^{\perp}:=\Id -\Psi_k$, the orthogonal projection on $H_k^{\perp}$.

The ``if'' part is the easiest. Fix $v$ in $\hilH$. We want to approach $v$ arbitrarily closely by vectors in $\cup_{\kN} H_k$. The case $v=0$ is trivial and we assume $v\neq 0$. Let $G=\psan (v)$. By hypothesis, for every $\varepsilon >0$ there is a $k$ such that the orthogonal projection on $G$ restricted $H_k^{\perp}$ has norm $\leq \frac{\varepsilon}{2\norm{v}}$. As $\Id -\Psi_k$ is the orthogonal projector on $H_k^{\perp}$, by Lemma \ref{lem:symnormproj}, $\norm{(\Id -\Psi_k)v} \leq \frac{\varepsilon}{2\norm{v}}\norm{v}=\frac{\varepsilon}{2}$. As~$H_k$ is dense in its closure, which contains $\Psi_kv$, one can find $w\in H_k$ such that $\norm{\Psi_kv-w}\leq \frac{\varepsilon}{2}$. By the triangular inequality, $\norm{v-w} \leq \varepsilon$. To summarize, for an arbitrary $v\neq 0$ in $\hilH$ and an arbitrary $\varepsilon >0$ there is a $\kN$ and a $w\in H_k$ such that $\norm{v-w} \leq \varepsilon$. Hence $\cup_{\kN} H_k$ is dense in~$\hilH$.\looseness=-1

For the ``only'' if part, let $G$ be a finite-dimensional subspace of $\hilH$, say of dimension $d$. We~may assume that $d>0$. Pick an orthonormal basis $(v_1,\dots,v_d)$. Let $\varepsilon >0$. As $\cup_{\kN} H_k$ is dense in $\hilH$, for each $\alpha\in \intt{1,d}$ there is $k_\alpha\in \mathbb{N}$ and $w_\alpha\in H_{k_\alpha}$ such that $\norm{v_\alpha-w_\alpha}\leq \varepsilon d^{-1/2}$. Set $k:=\max_{\alpha} k_\alpha$. As $(H_l)_{\lN}$ is an increasing sequence, $w_\alpha\in H_k$ for each $\alpha\in \intt{1,d}$. The orthogonal projection is a closest point so $\norm{\Psi_k^{\perp}v_\alpha}=\norm{(\Id -\Psi_k)v_\alpha}\leq \norm{v_\alpha-w_\alpha}\leq \varepsilon d^{-1/2}$. Write an arbitrary $v \in G$ as $v:=\sum_\alpha \lambda_{\alpha} v_\alpha$ for a $(\lambda_{\alpha})_{\alpha\in \intt{1,d}} \in \mathbb{C}^d$. Then
\begin{align*}
\norm{\Psi_k^{\perp}v}^2 & = \sum_{\alpha,\beta \in \intt{1,d}}\overline{\lambda_{\alpha}} \lambda_{\beta} \big\langle \Psi_k^{\perp}v_\alpha,\Psi_k^{\perp}v_\beta\big\rangle
\\
& \leq \sum_{\alpha,\beta \in \intt{1,d}} |\lambda_{\alpha}|\,|\lambda_{\beta}|\norm{\Psi_k^{\perp}v_\alpha} |\norm{\Psi_k^{\perp}v_\beta}
\\
& \leq \varepsilon^2 d^{-1} \bigg(\sum_{\alpha\in \intt{1,d}} |\lambda_{\alpha}|\bigg)^2
\\
& \leq \varepsilon^2 d^{-1} \bigg(d\sum_{\alpha\in \intt{1,d}} |\lambda_{\alpha}|^2\bigg)=\varepsilon^2 \norm{v}^2 ,
\end{align*}
using different avatars of the Cauchy--Schwarz inequality when going from the fist line to the second, and then from the third line to the fourth. We have proven that for every $\varepsilon >0$ there is a $\kN$ such that the orthogonal projection on $H_k^{\perp}$ restricted to~$G$ has norm $\leq \varepsilon$. Noting that~$G$ is finite-dimensional, hence closed, we may apply Lemma~\ref{lem:symnormproj} with $H:=H_k^{\perp}$, concluding the proof.\looseness=-1
\end{proof}

If $\underline{E}$ is an $\underline{m}$-Matryoshka, let $G_{\underline{E}}$, or simply $G$ when no confusion is possible, denote the orthogonal complement of $\cup_{\kN}E_k$, always a closed subspace of $\hilH$. Let $\Phi_{\underline{E}}$, or simply $\Phi$ when no confusion is possible, denote the orthogonal projector on $G_{\underline{E}}$.

Set $A:=\big\{\underline{E} \in M_{\underline{F},\underline{m}} \text{ such that } G_{\underline{E}}\text{ is finite-dimensional}\big\}$. Our aim is to show that $A$ is a~null set, and as we work in a complete probability space, it is enough to show that it is contained in a measurable set of measure $0$.

For $\varepsilon >0$ and $\kN$, let $A_{k}(\varepsilon):=\big\{\underline{E} \in M_{\underline{F},\underline{m}} \text{ such that } \norm{\Phi_{\underline{E}}v} \leq \varepsilon \norm{v} \text{ for every } v \in F_k^{\perp}\big\}$.

\begin{Lemma}
Each $A_{k}(\varepsilon)$ is measurable. For given $\varepsilon$ the sets $A_{k}(\varepsilon)$ increase with $\kN$, and $A\subset \cup_{\kN} A_{k}(\varepsilon)=\lim_{k\to \infty} A_{k}(\varepsilon)$ for every $\varepsilon >0$.
\end{Lemma}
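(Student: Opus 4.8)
The plan is to treat the three assertions in increasing order of difficulty: the monotonicity of the $A_k(\varepsilon)$ in $k$ is immediate, the inclusion $A\subset\cup_{\kN}A_k(\varepsilon)$ is a direct application of Lemma~\ref{lem:densecharac}, and the measurability of each $A_k(\varepsilon)$ is where the actual argument lies.

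For the monotonicity, since $F_k\subset F_{k+1}$ we have $F_{k+1}^\perp\subset F_k^\perp$, so the bound $\norm{\Phi_{\underline{E}}v}\leq\varepsilon\norm{v}$ imposed on all $v\in F_{k+1}^\perp$ is a \emph{weaker} requirement than the same bound imposed on all $v\in F_k^\perp$; hence $A_k(\varepsilon)\subset A_{k+1}(\varepsilon)$, and for an increasing sequence of sets $\cup_{\kN}A_k(\varepsilon)=\lim_{k\to\infty}A_k(\varepsilon)$. For the inclusion, fix $\underline{E}\in A$, so that $G_{\underline{E}}$ is finite-dimensional. As $\underline{F}$ is total, the increasing sequence $(F_k)_{\kN}$ has dense union, so I would apply Lemma~\ref{lem:densecharac} with $H_k:=F_k$ and with the finite-dimensional subspace $G:=G_{\underline{E}}$: it yields a $\kN$ for which the orthogonal projection on $G_{\underline{E}}$, namely $\Phi_{\underline{E}}$, restricted to $F_k^\perp$ has norm at most $\varepsilon$. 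This is exactly the statement $\underline{E}\in A_k(\varepsilon)$, so $A\subset\cup_{\kN}A_k(\varepsilon)$.

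The main obstacle is measurability, because $\Phi_{\underline{E}}$ depends on the entire sequence $\underline{E}$ through the closure of $\cup_{\kN}E_k$, and $A_k(\varepsilon)$ is cut out by an uncountable family of conditions. The key observation is that measurability can be tested vector by vector: by part $(iii)$ of Lemma~\ref{lem:basicfacts0}, for every fixed $w\in\hilH$ one has $\Phi_{\underline{E}}w=\lim_{j\to\infty}W_j(w)$ pointwise on $M$, and since each $W_j(w)$ is $\algE_j$-measurable by part $(i)$ of Lemma~\ref{lem:basicfacts1}, the $\hilH$-valued map $\underline{E}\mapsto\Phi_{\underline{E}}w$ is measurable as a pointwise limit of measurable maps, whence so is the scalar map $\underline{E}\mapsto\norm{\Phi_{\underline{E}}w}$ and the set $\{\underline{E}:\norm{\Phi_{\underline{E}}w}\leq\varepsilon\norm{w}\}$ is measurable. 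To pass from one vector to the whole of $F_k^\perp$, I would fix a countable dense subset $D$ of the separable space $F_k^\perp$ and argue that, for each fixed $\underline{E}$, the map $v\mapsto\Phi_{\underline{E}}v$ is continuous (indeed of operator norm at most $1$), so the closed inequality $\norm{\Phi_{\underline{E}}v}\leq\varepsilon\norm{v}$ holds for all $v\in F_k^\perp$ if and only if it holds for all $v\in D$. Hence $A_k(\varepsilon)=\cap_{w\in D}\{\underline{E}:\norm{\Phi_{\underline{E}}w}\leq\varepsilon\norm{w}\}$ is a countable intersection of measurable sets and is therefore measurable. The one delicate point to get right is precisely this reduction to $D$: it relies on the continuity of $\Phi_{\underline{E}}$ in $v$ for each fixed $\underline{E}$, which is what lets a condition verified on a dense set extend to its closure.
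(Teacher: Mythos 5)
Your proposal is correct and follows essentially the same route as the paper: monotonicity from $F_{k+1}^\perp\subset F_k^\perp$, the inclusion $A\subset\cup_{\kN}A_k(\varepsilon)$ via Lemma~\ref{lem:densecharac} applied to $H_k=F_k$ and $G=G_{\underline{E}}$, and measurability by writing $\Phi_{\underline{E}}v=\lim_k W_k(v)$ as a pointwise limit of $\algE_k$-measurable maps and then reducing to a countable dense subset of $F_k^\perp$ using the continuity of $\Phi_{\underline{E}}$. No gaps.
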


\begin{proof} Fix $\varepsilon >0$.
 If $\underline{E}$ is any $\underline{m}$-Matryoshka, recall that, for $v\in \hilH$ and $\kN$, $W_k(v)$ is the orthogonal projection of $w$ on the orthogonal complement of $E_k$ in $\hilH$. By $(i)$ in Lemma~\ref{lem:basicfacts1}, the map $\underline{E} \in (M,\algE) \mapsto W_k(v)$ is $\algE_k$-measurable, hence $\algE$-measurable, for each $\kN$ and each $v\in \hilH$. The content of $(iii)$ in Lemma~\ref{lem:basicfacts0} is that $\Phi_{\underline{E}} v=\lim_{k\to \infty} W_k(v)$, so $\underline{E} \in (M,\algE) \mapsto \Phi_{\underline{E}}(v)$ is measurable as a limit of measurable maps. Thus, for given $v\in \hilH$, the set $A_v:=\big\{\underline{E} \in M_{\underline{F},\underline{m}} \text{ such that } \norm{\Phi_{\underline{E}}v}\leq \varepsilon \norm{v} \big\}$ is measurable. The orthogonal projection $\Phi_{\underline{E}}$ is continuous and $\hilH$ is separable. Hence, if $H$ is any closed subspace of $H$, and $(v_n)_\nN$ a~dense sequence of vectors in $H$, the set $\big\{\underline{E} \in M_{\underline{F},\underline{m}} \text{ such that } \norm{\Phi_{\underline{E}}v} \leq \varepsilon \norm{v} \text{for every } v \in H\big\}$ is measurable because it can be written as the countable intersection $\cap_{\nN} A_{v_n}$. Applying this result to $H= F_k^{\perp}$ gives the measurability of each $A_{k}(\varepsilon)$.

As $(F_k)_{\kN}$ is an increasing sequence of subspaces of $\hilH$, $(F_k^\perp)_{\kN}$ is a decreasing sequence of subspaces of $\hilH$, so the sets $A_{k}(\varepsilon)$ increase with $\kN$ for fixed $\varepsilon$.

Recall that $\cup_{\kN} F_k$ is dense in $\hilH$. Applying Lemma \ref{lem:densecharac} to $H_k=F_k$ for $\kN$ yields that $A$ is contained in $\cup_{\kN} A_{k}(\varepsilon)$, which is also $\lim_{k\to \infty} A_{k}(\varepsilon)$ by the previous argument.
\end{proof}

We can now turn to the heart of the matter.

\begin{proof}[Proof the second assertion of Theorem \ref{theo:maindense}]
Our assumption now is that $\sum_k r_k$ is convergent. Hence, for large enough $i$, $s_i:=\prod_{k\geq i} (1-r_k) >0$, and $s_i \nearrow 1$ when $i\to \infty$.

By assumption, $\lim_{i\to \infty} n_i=\infty$. Thus $\mathbb{S}:=\{\iN^*, n_i-n_{i-1} >0\}$ is infinite. For each $i\in \mathbb{S}$ we choose a non-zero vector $v_i\in F_{i-1}^{\perp} \cap F_i$. By Corollary \ref{cor:prodineq}, $\expec{\norm{W_k(v_i)}^2}=\norm{v_i}^2\prod_{j=i}^{k} (1-r_j)$ for $k\geq i$. Taking the large $k$ limit, $\expec{\norm{\Phi v_i}^2}=\norm{v_i}^2s_i$ for $i\in \mathbb{S}$.

Fix $\varepsilon >0$, $\kN$ and $v\in F_k^\perp$. Write
\[ \expec{\norm{\Phi v}^2}=\expec{\norm{\Phi v}^2\ind{A_{k}(\varepsilon)}}+ \expec{\norm{\Phi v}^2\ind{A_{k}(\varepsilon)^c}}.\]
On $A_{k}(\varepsilon)$, $\norm{\Phi v}^2\leq \varepsilon^2 \norm{v}^2$ so
\[ \expec{\norm{\Phi v}^2\ind{A_{k}(\varepsilon)}}\leq \varepsilon^2 \norm{v}^2 P(A_{k}(\varepsilon)).\]
As $\norm{\Phi v}^2 \leq \norm{v}^2$ ($\Phi$ is an orthogonal projector!)
\[ \expec{\norm{\Phi v}^2\ind{A_{k}(\varepsilon)^c}}\leq \norm{v}^2P\big(A_{k}(\varepsilon)^c\big).\]

Thus we have proven that, for $\varepsilon >0$, $\kN$ and $v\in F_k^\perp$,
\[ \expec{\norm{\Phi v}^2} \leq \varepsilon^2 \norm{v}^2 P(A_{k}(\varepsilon)) +\norm{v}^2(1-P(A_{k}(\varepsilon)).\]
For $i\in \mathbb{S}$, $i>k$, it holds that $v_i\in F_{i-1}^{\perp} \cap F_i \subset F_k^\perp$, so putting things together
\[ \norm{v_i}^2s_i=\expec{\norm{\Phi v_i}^2} \leq \big(\varepsilon^2-1\big) \norm{v_i}^2 P(A_{k}(\varepsilon)) +\norm{v_i}^2.\]
Dividing by $ \norm{v_i}^2$ (which is $>0$ by construction) and rearranging yields
\[\big(1-\varepsilon^2\big)P(A_{k}(\varepsilon))\leq 1-s_i,\]
and letting $i\to \infty$ along $\mathbb{S}$ we obtain that $P(A_{k}(\varepsilon))=0$ for every $\varepsilon <1$, so $\cup_{\kN} A_{k}(\varepsilon)=\lim_{k\to \infty} A_{k}(\varepsilon)$ is a null set for every $\varepsilon <1$.

Thus $A:=\big\{\underline{E} \in M_{\underline{F},\underline{m}} \text{ such that } G_{\underline{E}}\text{ is finite-dimensional}\big\}$, a subset of $\cup_{\kN} A_{k}(\varepsilon)$ is also a null set. To summarize, if
$\sum_k r_k$ is convergent, the orthogonal $G_{\underline{E}}$ of $\cup_{\kN}E_k$ is infinite-dimensional with probability $1$, concluding the proof.
\end{proof}

\section{Conclusions} \label{sec:concl}

The conclusions of Theorem \ref{theo:maindense} give another example in probability theory when in fact the outcome of randomness is certitude, and depending on a condition (the convergence or divergence of a series) only two extreme alternatives are possible.

However, the above mathematical considerations, neat as they are, should be seen as no more than a preliminary to more interesting physical questions: once one is guaranteed that the full Hilbert space is ``covered'', some generic features of interest can in principle be studied.

To fix the ideas, let us assume that $m_k=k$ for $\kN$, i.e., at each step a new ray in $\hilH$ is added, constructing step by step a random orthonormal basis (modulo phases) $(u_k)_{k\geq 1}$, which is to be compared to the original orthonormal basis $(v_l)_{l\geq 1}$ in $\hilH$ such that $F_k:=\psan (v_l)_{l\in \intt{1,n_k}}$. More precisely, we would like to understand the behavior of matrix elements in the $u$-basis of operators which are local in the $v$-basis. To really make contact with, say, the eigenstate thermalization hypothesis, we would also need a Hamiltonian, providing an increasing sequence $(\omega_k)_{k\geq 1}$ of energies such that such $u_k$ has energy $\omega_k$.

Up to now, we have said nothing neither about the choice of sequence of dimensions of the trial spaces $F_k$, nor about the energies. As a conclusion, we propose a model of random Hamiltonians in which both sequences are proportional.

We thus choose a random integer sequence $n_k=:\dim F_k$, an energy scale $\omega$ and set $\omega_k=\omega \dim F_k$. The energy scale itself could be deterministic or random. It would be nice to have some physical intuition for such a proportionality relation, which is admittedly artificial.\footnote{The relationship with the more physical feature that the Hamiltonian should be somewhat local (though not strictly diagonal) in the original orthonormal basis $(v_l)_{l\geq 1}$ is obscure to say the least. Interpreted in terms of trial wave functions, the proportionality suggests that the larger the gap between two states, the more one should enlarge the trial space, a principle which looks questionable as well.}

Our proposal is based on the use of renewal processes. We fix a sequence $(p_n)_{\nN ^*}$, where each $p_n$ is $\geq 0$ and $\sum_{\nN ^*} p_n=1$, and use it to define a renewal process $(D_k)_{\kN}$: $D_0=0$ and the differences $D_k - D_{k-1}$, $\kN ^*$, are integer valued independent random variables taking value $\nN ^*$ with probability $p_n$. Taking $\dim F_k:=D_k$ for $\kN$ and carrying out the construction, the strong laws of large numbers lead to criteria to ensure that $(u_k)_{k\geq 1}$ is indeed total (with probability $1$ with respect to the renewal process). This is the case obviously if (but not only~if) the distribution $(p_n)_{\nN ^*}$ has a mean i.e., if $\sum_{\nN ^*} n p_n < +\infty$: the criterion in Theorem~\ref{theo:maindense} simply bowls down to the divergence of the harmonic series. We make the finite mean assumption to proceed.

The main virtue of this construction is that it provides a complete statistical model which we summarize: we take a renewal sequence $(D_k)_{\kN^*}$, set $F_k:=\psan (v_l)_{l\in \intt{1,D_k}}$, choose a random Matryoshka in $\underline{F}$, $(E_k)_{\kN^*}$ with $\dim E_k=k$ and take as Hamiltonian
\[
\hat{H}=\sum_{\lN^*} \omega_l |l\rangle \langle l|,
\] where $|l\rangle=u_l$, i.e., $\sum_{l\in \intt{1,k}}|l\rangle \langle l|$ is the orthogonal projector on $E_k$ and $\omega_k=\omega D_k$.

This model neglects all correlations between levels but for those which come from nearest level (the gap) correlation, as described by the common distribution of the $\dim F_k - \dim F_{k-1}$, $\kN ^*$. An interesting limit occurs when $\omega$ goes to $0$ but the gap itself has a limiting distribution.

We have insisted enough on the artificial features of this construction, so let us conclude on a more optimistic note. First the model at least exhibits correlations between the statistical properties of the eigenvalues and of the eigenstates. Second, as the distribution of the gap is our choice, one can in principle study the influence of level repulsion, expected in generic quantum system, and compare it with Poissonian statistics expected in integrable systems for instance. Third, the renewal structure ensures that the density of eigenvalues is asymptotically flat, so there is no need to straighten it; we could even take the first gap using the stationary measure, leading to a flat density everywhere.

We plan to return to the study of the statistical properties of this model in forthcoming works.

\appendix

\section{The measure on subspaces} \label{appsec:messubsp}

In this appendix we recall in an informal way a few salient features of $\mu_F^m$, the unitary invariant probability measure on the set of $m$-dimensional subspaces of $F$, a~finite-dimensional Hilbert space of dimension~$n$. We do not address uniqueness of $\mu_F^m$.

Our starting point is the Haar probability measure, $\mu_F$ on $U(F)$,\footnote{We apologize for a clash in notation with the main text, where $U$ denotes a \rv.} the group of unitary transformations of $F$. We fix an orthonormal basis $(v_1,\dots,v_n)$ of $F$.

For $m\in \intt{1,n}$ we consider the map $\Pi_m\colon U(F) \to \mathbb{G}(F,m)$ (recall this is the set of $m$-dimensional subspaces of $F$) defined by $\Pi_m(g):=\psan (gv_l)_{l\in \intt{1,m}}$. If $U(F)$ and $\mathbb{G}(F,m)$ are endowed with there usual smooth manifold structure and the associated Borel \salg , this map is smooth (hence measurable). So it induces a probability measure $\mu_F \circ \Pi_m^{-1}$ on $\mathbb{G}(F,m)$: the measure of a measurable subset $B$ is of $\mathbb{G}(F,m)$ is defined to be $\mu_F \big(\Pi_m^{-1}(B)\big)$. This measure on $\mathbb{G}(F,m)$ is unitary invariant because the Haar measure on $U(F)$ is, and we denote if by $\mu_F^m$. The group $U(F)$ contains the (linear extension of the) permutation group acting on the basis $(v_1,\dots,v_n)$. So even without knowing about uniqueness, it is clear by symmetry that using an arbitrary $m$-subset of the basis $(v_1,\dots,v_n)$ instead of $(v_1,\dots,v_m)$ to define $\Pi_m$ would lead to the same measure.

If $f\colon \mathbb{G}(F,m)\to \mathbb{R}$ is $\mu_F^m$-integrable, the definition of $\mu_F^m$ leads to
\[
\int_{E\in \mathbb{G}(F,m)} f(E) \, {\rm d}\mu_F^m(E):= \int_{g\in U(F)} f(\Pi_m(g)) \, {\rm d}\mu_F(g),
\]
and we apply this formula for special classes of functions $f$. Fix $v\in F$ and let $h_v(E)$ be the orthogonal projection of $v$ on the $m$-dimensional subspace $E$ of $F$. For $g\in U(F)$ we compute that
\[
\norm{h_v(\Pi_m(g))}^2 = |\aver{gv_1 ,v}|^2 +\cdots+ |\aver{gv_m ,v}|^2.
\]
Thus
\begin{align*}
\int_{E\in \mathbb{G}(F,m)} \norm{h_v(E)}^2 \, {\rm d}\mu_F^m(E)&:= \int_{g\in U(F)} \norm{h_v(\Pi_m(g))}^2 \, {\rm d}\mu_F(g)
\\
&
=\sum_{l\in \intt{1,m}} \int_{g\in U(F)} |\aver{gv_l ,v}|^2 \, {\rm d}\mu_F(g).
\end{align*}
By symmetry, it is plain that the $m$ summands on the right-hand side are equal, and for $m=n$ they sum to $\norm{v}^2$. Hence, for each $l\in \intt{1,n}$,
\[
\int_{g\in U(F)} |\aver{gv_l ,v }|^2 \, {\rm d}\mu_F(g) =\norm{v}^2\frac{1}{n}.
\]
Consequently
\[
\int_{E\in \mathbb{G}(F,m)} \norm{h_v(E)}^2 \, {\rm d}\mu_F^m(E)=\norm{v}^2\frac{m}{n}.
\]
Now $\int_{E\in \mathbb{G}(F,m)} f (E) \, {\rm d}\mu_F^m(E)$ is simply the expectation of (the random variable) $f$ under the probability measure $\mu_F^m$ and $h_v(E)$ is what we denoted by $U$ in the main text, so we retrieve the formula $\expec{\norm{U}^2}=\norm{v}^2\frac{m}{n}$.

\subsection*{Acknowledgments}

I thank Denis Bernard for discussions about the eigenstate thermalization hypothesis that finally led me to this study, and Philippe Biane for discussions on some technical points.

\pdfbookmark[1]{References}{ref}
\LastPageEnding

\end{document}